\documentclass[graybox,natbib]{SNmult}

\usepackage{type1cm}
\usepackage{makeidx}
\usepackage{graphicx}
\graphicspath{{figures/}}
\usepackage{multicol}
\usepackage[bottom]{footmisc}
\usepackage{newtxtext}
\usepackage[varvw]{newtxmath}

\usepackage{amsmath}
\usepackage{mathtools}
\usepackage{bm}
\usepackage{bbm}
\usepackage{mathrsfs}
\usepackage{booktabs}
\usepackage{longtable}
\usepackage{multirow}
\usepackage{array}
\usepackage{float}
\usepackage{enumitem}
\usepackage[hidelinks]{hyperref}
\usepackage[nameinlink,noabbrev]{cleveref}

\spnewtheorem{assumption}[theorem]{Assumption}{\bfseries}{\rmfamily}

\providecommand{\E}{}
\renewcommand{\E}{\mathbb E}

\makeindex

\begin{document}

\title*{A Neurofinance Framework for Subjective Temporal Perception, Risk, and Investment Behavior}
\titlerunning{Subjective Temporal Perception, Risk, and Investment Behavior}
\author{Pascal Stiefenhofer}
\authorrunning{P. Stiefenhofer}
\institute{Pascal Stiefenhofer \at Newcastle University, Newcastle upon Tyne, UK, \email{pascal.stiefenhofer@newcastle.ac.uk}}

\maketitle

\abstract*{Neurofinance shows that financial valuation depends on evolving neural states,
while temporal experience is itself state dependent. Yet intertemporal models
typically treat time as exogenous and ask how delay affects valuation. This
paper examines the converse question: can valuation-related neural dynamics
generate subjective financial time? We develop a continuous-time model on a
joint financial--neuro-evaluative state space in which subjective financial
time is accumulated valuation along dynamically admissible histories. We
characterise realised temporal-rate dispersion and show that, under matched
financial dynamics, distinct neuro-evaluative trajectories can generate
different subjective financial times. Financial equivalence therefore need not
imply temporal equivalence. Valuation curvature further determines local path
dependence.
A proof-of-concept behavioural--fMRI analysis uses \(1{,}183\) observations,
\(1{,}126\) valuation transitions, and \(798\) financially matched pairs.
Using a medial-prefrontal valuation-state coordinate based on vmPFC and dmPFC
responses, the strongest participant exhibits a positive association between
neural-state separation and subsequent valuation divergence
(\(\rho=0.253,\ p_{\mathrm{perm}}=0.001\)), robust to residual financial
distance (\(\rho=0.251\)) and tighter matching (\(\rho=0.207\)).
Participant effects are heterogeneous, although combined evidence gives
\(p=0.0216\). The data support the neural state-separation premise rather than
directly identifying subjective financial time. The framework establishes a
theoretical and empirically grounded basis for endogenous subjective time in
neurofinance.
\keywords{Neurofinance $\cdot$ Subjective financial time $\cdot$ Neural valuation $\cdot$ Relative subjective value $\cdot$ Intertemporal choice $\cdot$ Temporal dispersion $\cdot$ Path dependence}}

\abstract{Neurofinance shows that financial valuation depends on evolving neural states,
while temporal experience is itself state dependent. Yet intertemporal models
typically treat time as exogenous and examine how delay affects valuation. This
paper reverses this perspective by asking whether valuation-related neural
dynamics can themselves generate subjective financial time. We develop a continuous-time model on a
joint financial--neuro-evaluative state space in which subjective financial
time is accumulated valuation along dynamically admissible histories. We
characterise realised temporal-rate dispersion and show that, under matched
financial dynamics, distinct neuro-evaluative trajectories can generate
different subjective financial times. Financial equivalence therefore need not
imply temporal equivalence. Valuation curvature further determines local path
dependence.
A proof-of-concept behavioural--fMRI analysis uses \(1{,}183\) observations,
\(1{,}126\) valuation transitions, and \(798\) financially matched pairs.
Using a medial-prefrontal valuation-state coordinate based on vmPFC and dmPFC
responses, the strongest participant exhibits a positive association between
neural-state separation and subsequent valuation divergence
(\(\rho=0.253,\ p_{\mathrm{perm}}=0.001\)), robust to residual financial
distance (\(\rho=0.251\)) and tighter matching (\(\rho=0.207\)).
Participant effects are heterogeneous, although combined evidence gives
\(p=0.0216\). The empirical evidence supports the neural state-separation premise required by the theory, while stopping short of directly identifying subjective financial time. The results therefore provide empirical grounding for a framework in which subjective financial time emerges endogenously from valuation-related neural dynamics.
\keywords{Neurofinance $\cdot$ Subjective financial time $\cdot$ Neural valuation $\cdot$ Relative subjective value $\cdot$ Intertemporal choice $\cdot$ Temporal dispersion}}

\noindent\textbf{JEL Classification:} D87; D91; G40; G41; C61.

\section{Introduction}
\label{sec:introduction}

Neurofinance shows that financial valuation is dynamically state dependent.
Reward, subjective value, risk, prediction error, loss aversion, framing, and
physiological state recruit interacting neural and biological processes
\citep{CamererLoewensteinPrelec2005,KnutsonAdamsFongHommer2001,
KuhnenKnutson2005,PadoaSchioppaAssad2006,
PreuschoffBossaertsQuartz2006,KnutsonBossaerts2007,
TomFoxTrepelPoldrack2007,PreuschoffQuartzBossaerts2008,
DeMartinoKumaranSeymourDolan2006,DeMartinoCamererAdolphs2010,
MohrBieleHeekeren2010,CoatesHerbert2008,
SapienzaZingalesMaestripieri2009,MiendlarzewskaKometerPreuschoff2019}.
Financial choice therefore reflects both observable economic conditions and
evolving neurocognitive states.

Intertemporal-choice research, however, generally treats time as an input into
valuation. Immediate and delayed rewards recruit partly differentiated
valuation processes \citep{McClureLaibsonLoewensteinCohen2004}, valuation
circuitry tracks discounted subjective value \citep{KableGlimcher2007}, and
subjective timing relates to intertemporal behaviour
\citep{WittmannPaulus2008}. Yet temporal experience itself varies with
attention, affect, interoception, memory, embodiment, and brain--body dynamics
\citep{BuhusiMeck2005,WittmannVanWassenhove2009,Wittmann2013,
TsaoYousefzadehMeckMoserMoser2022,Buzsaki2026}.

This creates an asymmetry. Neurofinance studies how time affects valuation,
but not whether valuation dynamics themselves generate temporal structure.
If both valuation and temporal experience depend on neurocognitive state, then
financially identical histories need not be temporally equivalent. We define
\emph{subjective financial time} as the temporal progression generated by
accumulated valuation along an evolving financial--neuro-evaluative history,
rather than by the passage of calendar time. The paper therefore asks when
distinct neuro-evaluative dynamics can generate different subjective financial
times along the same financial trajectory, and when accumulated subjective
financial time is independent of the path taken. The usual direction of analysis is thus reversed, with valuation dynamics
treated as primitive and subjective financial time emerging endogenously.

The decision maker occupies the joint state space
\(\overline M=\overline M_F\times\overline M_E\), where \(x_F\) denotes the
financial state and \(x_E\) the latent neuro-evaluative state. Admissible
evolution is governed by a differential inclusion together with a valuation
one-form \(\omega\). For an admissible history \(X\), subjective financial time
is defined by
\[
\tau_X(t)
=
\tau_0+
\int_0^t
\omega_{X(s)}\!\left(\dot X(s)\right)\,ds.
\]
Calendar time \(t\) indexes physical evolution, whereas \(\tau_X(t)\) records
the temporal progression induced by valuation dynamics along the realised
history. It is therefore neither a neural clock nor a modification of
chronological time, but an endogenous temporal quantity generated by the
agent's evolving evaluative state.

The theory yields three principal results. Theorem~\ref{thm:dispersion}
characterises the realisable spectrum of subjective-time rates
\(\Lambda(x)\) and its dispersion \(\Delta_\tau(x)\).
Theorem~\ref{thm:neuro-temporal-divergence} establishes that, under financially
matched realisability,
\[
X_F^+(t)=X_F^-(t)
\qquad\not\Rightarrow\qquad
\tau_{X^+}(t)=\tau_{X^-}(t),
\]
so identical financial evolution need not generate identical subjective
financial time. Theorem~\ref{thm:local-structure} identifies valuation
curvature, \(\Omega=d\omega\), as the local source of path dependence in
accumulated subjective financial time. Homogeneous time is recovered as the
special case in which temporal-rate dispersion and valuation curvature both
vanish and the resulting unique temporal rate is normalised to one.

The theoretical analysis extends geometric approaches to rational agency
\citep{Stiefenhofer2026GeometryTime,
Stiefenhofer2026EconomicTemporality,
Stiefenhofer2026KantianTime} by augmenting the financial state space with an
explicit neuro-evaluative state. Admissible financial--neuro-evaluative
histories are represented by a differential inclusion, while a valuation
one-form determines the accumulation of subjective financial time along those
histories. This construction permits temporal-rate dispersion, financial
matching across distinct neuro-evaluative trajectories, and local path
dependence to be characterised mathematically.

The empirical analysis provides a proof-of-concept test of the neural
state-separation premise using the behavioural--fMRI intertemporal-choice
experiment of \citet{PivaEtAl2019}. Event-related vmPFC, dmPFC, and TPJ
responses provide observable neural measurements, with
\[
\nu=z(\mathrm{vmPFC})-z(\mathrm{dmPFC})
\]
used as the primary neuro-evaluative coordinate. Observations are financially
matched on monetary amounts, objective delays, functional run, and Self/Other
condition. Within matched pairs, neural-state separation is related to
subsequent valuation divergence, with permutation inference and alternative
matching restrictions used to assess robustness.

The empirical evidence does not measure \(\tau\), identify \(\dot{\tau}\),
estimate temporal dispersion, or test curvature, and the observed neural
coordinate should not be identified with the theoretical state \(x_E\).
Instead, it establishes the narrower empirical premise required by the theory,
namely that conditioning closely on observable financial state need not
eliminate neuro-evaluative variation associated with subsequent valuation.
A direct empirical test of generated subjective financial time would require
longitudinal measurement of financial states, neuro-evaluative states,
valuation dynamics, and subjective temporal experience.

Section~\ref{sec:model} develops the model,
Section~\ref{sec:analysis} establishes the results,
Section~\ref{sec:empirical} presents the behavioural--fMRI analysis,
Section~\ref{sec:discussion} discusses the implications,
Section~\ref{sec:conclusion} concludes, and
Appendix~\ref{app:appendix} provides the supplementary material.

\section{Model}
\label{sec:model}

\subsection{Financial--Neuro-Evaluative State Geometry}
\label{subsec:state-geometry}

Let \(\overline M_F\) and \(\overline M_E\) be connected finite-dimensional
\(C^2\) manifolds and define
\(\overline M=\overline M_F\times\overline M_E\). The non-empty closed set
\(M\subseteq\overline M\) is the feasible state space, with
\(x=(x_F,x_E)\in M\) denoting financial state \(x_F\) and latent
neuro-evaluative state \(x_E\).

The component \(x_E\) summarises valuation-relevant neurocognitive and
physiological conditions, including reward valuation, salience, arousal,
interoception, perceived risk, affect, memory, and executive control
\citep{KnutsonAdamsFongHommer2001,KuhnenKnutson2005,
PreuschoffBossaertsQuartz2006,TomFoxTrepelPoldrack2007,
PreuschoffQuartzBossaerts2008,Wittmann2013,
MiendlarzewskaKometerPreuschoff2019}. It is a functional rather than
anatomical state. The product structure implies
\(T_x\overline M=T_{x_F}\overline M_F\oplus T_{x_E}\overline M_E\), so
\(v=(v_F,v_E)\). This permits financial motion \(v_F\) to be fixed while
neuro-evaluative motion \(v_E\) varies, without assuming dynamic independence.
Fix a complete Riemannian metric on \(\overline M\). For its induced distance
\(d_{\overline M}\), define the Bouligand contingent tangent cone by
\begin{equation}
T_M(x)
=
\left\{
v\in T_x\overline M:
\liminf_{h\downarrow0}
\frac{d_{\overline M}(\exp_x(hv),M)}{h}=0
\right\}.
\label{eq:contingent-cone}
\end{equation}
For \(x\in\operatorname{int}M\), \(T_M(x)=T_x\overline M\). Possible
functional coordinates include
\(x_F=(w,\pi,\ell,m,z)\) for wealth, portfolio position, liquidity, market
conditions, and other financial variables, and
\(x_E=(\nu,\sigma,\iota,\kappa,\eta,\rho)\) for valuation, salience or arousal,
interoception, executive control, affect, and perceived risk. These coordinates
are illustrative and impose neither a particular coordinate representation nor
a fixed anatomical interpretation. Latent neuro-evaluative state is
distinguished from its empirical measurement through a measurement map
\[
H_E:\overline M_E\longrightarrow\mathcal Y_E,
\]
where \(H_E(x_E)\) denotes a neural, physiological, or behavioural measurement
of \(x_E\). Empirical proxies such as BOLD responses are therefore measurements
of neuro-evaluative state rather than the state itself. Potential state motion is
represented by
\(F:M\rightrightarrows T\overline M\), with
\(\varnothing\neq F(x)\subseteq T_x\overline M\). The feasible motions that
also satisfy the state constraint are collected in the viable feasible
correspondence
\[
F_M(x)
=
F(x)\cap T_M(x).
\]
Thus \(F\) describes potential state motion, while \(F_M\) restricts that
motion to directions that are viable in \(M\). The additional valuation
restriction defining the admissible correspondence \(A\) is introduced in
Subsection~\ref{subsec:valuation-geometry}.

\medskip

\noindent
\textbf{Assumption A1 (Feasible dynamics).}
For every \(x\in M\), \(F(x)\) is non-empty, compact, and convex. The
correspondence \(F\) is upper hemicontinuous in the tangent-bundle topology
and locally bounded.

\medskip

\noindent
\textbf{Assumption A2 (Viability).}
For every \(x\in M\),
\[
F_M(x)
=
F(x)\cap T_M(x)
\neq\varnothing.
\]

\medskip

\noindent
\textbf{Assumption A3 (Growth).}
There exist \(a,b\geq0\) and \(x^\circ\in\overline M\) such that
\[
\sup_{v\in F(x)}\|v\|
\leq
a+b\,d_{\overline M}(x,x^\circ),
\qquad x\in M.
\]

Assumptions A1--A3 provide regularity, viability, and growth control without
imposing separability or uniqueness of financial and neuro-evaluative
dynamics. In particular, the framework permits multiple feasible and viable
directions from a common state, allowing subsequent valuation restrictions to
select among dynamically distinct financial--neuro-evaluative motions.

\subsection{Valuation Geometry}
\label{subsec:valuation-geometry}

Let \(U\supseteq M\) be open and let
\(\omega\in\Omega^1(U)\) be a \(C^1\) valuation one-form. For
\(v=(v_F,v_E)\in T_x\overline M\),
\[
\omega_x(v)
=
\omega_{F,x}(v_F)+\omega_{E,x}(v_E),
\]
where both components may depend on the full state \(x=(x_F,x_E)\).
Consequently, identical financial motion need not generate identical
valuation rates when neuro-evaluative state or motion differs.

\begin{definition}[Valuation geometry]
The pair \((M,\omega)\) is the \emph{valuation geometry}, and
\[
\Omega=d\omega
\]
is its \emph{valuation curvature}.
\end{definition}

If \(\omega=dV\) for some scalar field \(V\), then along every piecewise
\(C^1\) curve \(X\),
\[
\int_0^t
\omega_{X(s)}\!\left(\dot X(s)\right)\,ds
=
V(X(t))-V(X(0)).
\]
More generally, \(d\omega=0\) implies local exactness on contractible
neighbourhoods. Thus \(d\omega\) measures the local obstruction to
endpoint-only valuation accumulation, while global path independence may
additionally depend on the topology of the state space.

The viable feasible correspondence \(F_M\) identifies state motions permitted
by the dynamics and the state constraint. Valuation geometry imposes the
additional orientation condition
\(\omega_x(v)\geq0\). The resulting admissible correspondence is
\[
A(x)
=
F_M(x)\cap
\left\{
v\in T_x\overline M:
\omega_x(v)\geq0
\right\}.
\]
Hence admissible motion is simultaneously feasible, viable, and
non-decreasing in evaluative progression. The inequality permits temporal
stasis, \(\omega_x(v)=0\), while excluding reversal of the subjective
financial-time orientation introduced below.

\medskip

\noindent
\textbf{Assumption A4 (Admissible dynamics).}
The correspondence \(A\) has non-empty compact convex values, is upper
hemicontinuous, and satisfies
\[
\sup_{v\in A(x)}\|v\|
\leq
c\left(1+d_{\overline M}(x,\bar x)\right),
\qquad x\in M,
\]
for some \(c>0\) and \(\bar x\in M\).

Assumption A4 ensures that imposing the valuation orientation preserves the
regularity, non-emptiness, and growth control required for admissible
dynamics. These properties do not follow from Assumptions A1--A3 alone,
because the restriction \(\omega_x(v)\geq0\) may exclude otherwise viable
feasible velocities.

For \(x_0\in M\) and \(T>0\), define the family of admissible histories by
\[
\mathcal S_T(x_0)
=
\left\{
X\in AC([0,T],M):
X(0)=x_0,\ 
\dot X(t)\in A(X(t))
\text{ for a.e. }t\in[0,T]
\right\}.
\]
Writing \(X=(X_F,X_E)\) permits comparison of admissible histories with
\(X_F^+(t)=X_F^-(t)\) and \(X_E^+(t)\neq X_E^-(t)\), thereby isolating the
neuro-evaluative contribution under matched financial motion.

The instantaneous valuation-rate spectrum is
\begin{equation}
\Lambda(x)
=
\omega_x(A(x))
=
[\underline\lambda(x),\overline\lambda(x)]
\subseteq[0,\infty),
\label{eq:rate-spectrum}
\end{equation}
where
\(\underline\lambda(x)=\min_{v\in A(x)}\omega_x(v)\) and
\(\overline\lambda(x)=\max_{v\in A(x)}\omega_x(v)\). Its dispersion is
\begin{equation}
\Delta_\tau(x)
=
\operatorname{diam}\Lambda(x)
=
\overline\lambda(x)-\underline\lambda(x).
\label{eq:dispersion}
\end{equation}

Thus \(\Lambda(x)\) gives the admissible instantaneous valuation rates and
\(\Delta_\tau(x)\) their dispersion. Their dynamic realisability is established
below.

\subsection{Conditional Neuro-Evaluative Temporal Rates}
\label{subsec:conditional-neuro-rates}

Let \(\operatorname{pr}_F:T_x\overline M\to T_{x_F}\overline M_F\) and define
the admissible financial velocities by \(A_F(x)=\operatorname{pr}_F A(x)\).
For \(u_F\in A_F(x)\), the conditional neuro-evaluative velocities are
\[
A_E(x\mid u_F)
=
\{v_E\in T_{x_E}\overline M_E:(u_F,v_E)\in A(x)\}.
\]
This set is non-empty, compact, and convex. The corresponding conditional subjective-time-rate spectrum is
\begin{equation}
\Lambda_E(x\mid u_F)
=
\{\omega_x(u_F,v_E):v_E\in A_E(x\mid u_F)\}
=
[\underline\lambda_E(x\mid u_F),
\overline\lambda_E(x\mid u_F)].
\label{eq:conditional-neuro-spectrum}
\end{equation}
Since
\(\omega_x(u_F,v_E)=\omega_{F,x}(u_F)+\omega_{E,x}(v_E)\), the financial
contribution is fixed across the conditional fibre. Define conditional neuro-evaluative temporal dispersion by
\begin{equation}
\Delta_{\tau,E}(x\mid u_F)
=
\operatorname{diam}\Lambda_E(x\mid u_F)
=
\overline\lambda_E(x\mid u_F)
-
\underline\lambda_E(x\mid u_F).
\label{eq:neuro-dispersion}
\end{equation}
Equivalently,
\[
\Delta_{\tau,E}(x\mid u_F)
=
\max_{v_E\in A_E(x\mid u_F)}\omega_{E,x}(v_E)
-
\min_{v_E\in A_E(x\mid u_F)}\omega_{E,x}(v_E).
\]

Thus \(\Delta_{\tau,E}(x\mid u_F)\) measures temporal-rate dispersion generated
by neuro-evaluative variation after fixing both \(x\) and financial motion
\(u_F\). In particular, \(\Delta_{\tau,E}(x\mid u_F)>0\) implies distinct
admissible neuro-evaluative velocities with different subjective-time rates
under identical instantaneous financial motion. Since \(\Lambda_E(x\mid u_F)\subseteq\Lambda(x)\),
\(\Delta_{\tau,E}(x\mid u_F)\leq\Delta_\tau(x)\). Hence
\(\Delta_\tau(x)\) measures total temporal-rate dispersion, whereas
\(\Delta_{\tau,E}(x\mid u_F)\) isolates its neuro-evaluative component
conditional on financial motion. Dynamic realisation of this dispersion is
established in Theorem~\ref{thm:neuro-temporal-divergence}.

\subsection{Subjective Financial Time}
\label{subsec:subjective-financial-time}

Admissible dynamics satisfy
\(\dot X(t)\in A(X(t))\) a.e., with \(X(0)=x_0\in M\).
The interpretation of evaluative progression as agent-relative time follows
Stiefenhofer and Neesham (2026), who show that any temporal representation
consistent with evaluative admissibility coincides, up to strictly monotone
transformation, with realised evaluative progression. In the present setting,
\(\omega_x(v)\) is the local evaluative rate associated with admissible motion
\(v\in A(x)\). We therefore take its path integral as the canonical
normalisation of subjective financial time.

\begin{definition}[Subjective financial time]
For \(X\in\mathcal S_T(x_0)\), subjective financial time is
\begin{equation}
\tau_X(t)
=
\tau_0+
\int_0^t
\omega_{X(s)}\!\left(\dot X(s)\right)\,ds.
\label{eq:time}
\end{equation}
\end{definition}

Thus \(\tau_X\) is generated by evaluative progression along the realised joint
trajectory rather than imposed as physical clock time. Under the
financial--neuro-evaluative decomposition,
\[
\tau_X(t)-\tau_0
=
\int_0^t\omega_{F,X(s)}(\dot X_F(s))\,ds
+
\int_0^t\omega_{E,X(s)}(\dot X_E(s))\,ds,
\]
so neuro-evaluative motion contributes directly to accumulated subjective
financial time. Since \(X\) is absolutely continuous and admissibility requires
\(\omega_{X(t)}(\dot X(t))\geq0\), \(\tau_X\) is absolutely continuous and
non-decreasing, with
\begin{equation}
\dot\tau_X(t)
=
\omega_{X(t)}(\dot X(t))
\in\Lambda(X(t))
\quad\text{a.e.}
\label{eq:time-rate}
\end{equation}
Hence \(\Delta_\tau(x)>0\) permits distinct instantaneous temporal rates, while
\(\Delta_{\tau,E}(x\mid u_F)>0\) permits such differences under fixed financial
motion. Their dynamic realisability is established below.

\begin{definition}[Temporal equivalence]
Admissible histories \(X_1,X_2\) with common temporal normalisation are
\emph{temporally equivalent}, written \(X_1\sim_\tau X_2\), if
\[
\tau_{X_1}(t)=\tau_{X_2}(t)
\qquad\text{for every }t.
\]
\end{definition}

Temporal equivalence concerns accumulated subjective time, not equality of the
underlying histories. If
\(\omega_{X(t)}(\dot X(t))\geq\varepsilon>0\) a.e., then \(\tau_X\) is
strictly increasing and admits an absolutely continuous inverse \(t=t(\tau)\).
The reparametrised history \(\widehat X(\tau)=X(t(\tau))\) satisfies
\[
\frac{d\widehat X}{d\tau}
=
\frac{\dot X(t(\tau))}
{\omega_{\widehat X(\tau)}(\dot X(t(\tau)))}
\quad\text{a.e.}
\]
Thus subjective financial time may itself parameterise a trajectory whenever
its realised rate is bounded away from zero. The primitives \((M,F,\omega)\)
induce \(A\), admissible histories \(\mathcal S_T(x_0)\), subjective time
\(\tau_X\), rate spectra \(\Lambda\) and \(\Lambda_E\), dispersions
\(\Delta_\tau\) and \(\Delta_{\tau,E}\), and curvature \(\Omega=d\omega\).
The results below characterise their dynamically realised temporal structure.

\section{Analysis and Fundamental Results}
\label{sec:analysis}

We now establish dynamic realisability. Existence of admissible histories
ensures feasible dynamics, while realisability of specified velocities is
required to interpret \(\Lambda(x)\) and \(\Lambda_E(x\mid u_F)\) as realised
subjective-time rates.

\subsection{Existence, Continuation, and Dynamic Realisability}
\label{subsec:existence}

\begin{theorem}[Existence and global continuation of admissible histories]
\label{thm:existence}

Suppose Assumption A4 holds. Then, for every \(x_0\in M\), there exist
\(\delta>0\) and \(X\in AC([0,\delta],M)\) such that
\begin{equation}
X(0)=x_0,
\qquad
\dot X(t)\in A(X(t))
\quad
\text{for a.e. }t\in[0,\delta].
\label{eq:local-admissible-solution}
\end{equation}
Every maximal admissible solution is defined on \([0,\infty)\). Consequently,
for every finite \(T>0\),
\begin{equation}
\mathcal S_T(x_0)\neq\varnothing .
\label{eq:nonempty-solution-family}
\end{equation}
\end{theorem}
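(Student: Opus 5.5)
Local existence will come from the classical existence theorem for upper hemicontinuous differential inclusions with compact convex values (Filippov--Ważewski/Aubin--Cellina type), applied in local coordinates, with viability in \(M\) handled through the fact that \(A(x)\subseteq F_M(x)\subseteq T_M(x)\). Global continuation will then follow from the linear growth bound in Assumption A4 by a Gronwall argument. The structure is: (i) local existence in a chart, (ii) viability in the closed set \(M\), (iii) a priori bounds excluding finite-time blow-up, (iv) concatenation to get \(\mathcal S_T(x_0)\neq\varnothing\).

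\emph{Local existence and viability.} Fix \(x_0\in M\) and a coordinate chart around \(x_0\) on \(\overline M\), identifying a neighbourhood with an open subset of \(\mathbb R^n\). By Assumption A4, \(A\) is upper hemicontinuous with non-empty compact convex values and is bounded on a compact neighbourhood \(K\) of \(x_0\) by \(c(1+\sup_{K}d_{\overline M}(\cdot,\bar x))\). Because \(A(x)\subseteq T_M(x)\) for every \(x\in M\) and \(M\) is closed, the Haddad--Aubin viability theorem yields \(\delta>0\) and \(X\in AC([0,\delta],M)\) solving \eqref{eq:local-admissible-solution}. I would carry this out through the standard Euler-polygon scheme. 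Tangency in the sense of \eqref{eq:contingent-cone} lets each step be corrected to stay within distance \(o(h)\) of \(M\). The uniform bound gives equicontinuity, so Arzel\`a--Ascoli extracts a uniformly convergent subsequence whose limit lies in \(M\) by closedness. The convergence theorem for differential inclusions, which uses upper hemicontinuity and convexity through Mazur's lemma, shows that the limit satisfies \(\dot X\in A(X)\) a.e.

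\emph{Global continuation.} Order admissible solutions from \(x_0\) by extension and apply Zorn's lemma to obtain a maximal solution on \([0,T^*)\). Suppose \(T^*<\infty\). Along the solution, \(\tfrac{d}{dt}d_{\overline M}(X(t),\bar x)\le\|\dot X(t)\|\le c(1+d_{\overline M}(X(t),\bar x))\) a.e., so Gronwall gives \(d_{\overline M}(X(t),\bar x)\le (1+d_{\overline M}(x_0,\bar x))e^{ct}\). The trajectory therefore stays in a bounded set, which is relatively compact because the metric is complete (Hopf--Rinow). Consequently \(\|\dot X\|\) is bounded, \(X\) is Lipschitz on \([0,T^*)\), and \(X(T^*)=\lim_{t\uparrow T^*}X(t)\) exists and lies in \(M\) by closedness. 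Applying local existence at \(X(T^*)\) and concatenating, since absolute continuity is preserved, extends the solution beyond \(T^*\) and contradicts maximality. Hence \(T^*=\infty\), and restricting to \([0,T]\) gives \eqref{eq:nonempty-solution-family}.

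\emph{Main obstacle.} The delicate step is the viability part of local existence. We only know \(A(x)\subseteq T_M(x)\) for \(x\in M\), while \(A\) is defined only on \(M\), so the Euler scheme must be run with projections onto \(M\). I would first extend \(A\) to a neighbourhood by \(\tilde A(y)=\overline{\mathrm{co}}\,A(\pi(y))\), where \(\pi(y)\) is a nearest point in \(M\). This extension need not be upper hemicontinuous, so if that fails I would instead invoke the Haddad viability theorem directly in chart coordinates. That route requires checking that the Riemannian contingent cone \eqref{eq:contingent-cone} agrees with the Euclidean one in charts. It does, because \(\exp_x(hv)=x+hv+O(h^2)\) and the two distances are locally equivalent.
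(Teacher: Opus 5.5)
Your proposal is correct and follows essentially the same route as the paper. Local existence comes from the finite-dimensional (Haddad) viability theorem applied in a chart, using $A(x)\subseteq F_M(x)\subseteq T_M(x)$ and chart-invariance of the contingent cone; global continuation then uses the Gronwall bound on $d_{\overline M}(X(t),\bar x)$, completeness of $\overline M$, closedness of $M$, and restarting at the limit point, with your Zorn's-lemma step and your cone-compatibility check making explicit what the paper leaves implicit.
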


\begin{proof}
By Assumption~A4, \(A:M\rightrightarrows T\overline M\) has non-empty compact
convex values, is upper hemicontinuous, locally bounded, and satisfies
\(A(x)\subseteq T_M(x)\). Fix \(x_0\in M\) and a \(C^2\) chart
\(\phi:U_0\to V_0\subseteq\mathbb R^n\) containing \(x_0\). For
\(y=\phi(x)\), set
\[
\widetilde A(y)=D\phi(x)A(x).
\]
Invariance of the contingent cone under \(C^1\) diffeomorphisms gives
\[
D\phi(x)T_M(x)
=
T_{\phi(M\cap U_0)}(\phi(x)),
\]
hence
\(\widetilde A(\phi(x))
\subseteq T_{\phi(M\cap U_0)}(\phi(x))\).
The induced bundle isomorphism preserves upper hemicontinuity, compactness,
convexity, and local boundedness. The finite-dimensional viability theorem
therefore yields a local absolutely continuous solution \(Y\) satisfying
\(\dot Y(t)\in\widetilde A(Y(t))\) a.e. Thus
\(X=\phi^{-1}\circ Y\) satisfies \eqref{eq:local-admissible-solution}. For global continuation, let \(X:[0,T_{\max})\to M\) be maximal and set
\(\rho(t)=d_{\overline M}(X(t),\bar x)\), where \(\bar x\) is the reference
state in Assumption~A4. Since distance is \(1\)-Lipschitz and
\(\|\dot X(t)\|\leq c(1+\rho(t))\) a.e.,
\[
\rho(t)
\leq
\rho(0)+c\int_0^t(1+\rho(s))\,ds.
\]
Gronwall's inequality yields
\begin{equation}
\rho(t)
\leq
(1+\rho(0))e^{ct}-1,
\qquad t<T_{\max}.
\label{eq:distance-bound}
\end{equation}
If \(T_{\max}<\infty\), \eqref{eq:distance-bound} implies
\(\|\dot X(t)\|\leq K_{\max}\) a.e., and therefore
\[
d_{\overline M}(X(t),X(s))
\leq
K_{\max}|t-s|.
\]
Hence \(X(t)\) is Cauchy as \(t\uparrow T_{\max}\). Completeness of
\(\overline M\) gives \(X(t)\to x^\ast\in\overline M\), and closedness of
\(M\) implies \(x^\ast\in M\). Local existence at \(x^\ast\) then extends
\(X\) beyond \(T_{\max}\), contradicting maximality. Thus
\(T_{\max}=\infty\), and restriction to any finite interval proves
\eqref{eq:nonempty-solution-family}.
\end{proof}

\begin{proposition}[Temporal process induced by admissible evolution]
\label{prop:temporal-process}

Every admissible history
\(X=(X_F,X_E)\in\mathcal S_T(x_0)\), together with an initial temporal
normalisation \(\tau_0\in\mathbb R\), determines a unique absolutely continuous
subjective financial-time process
\[
\tau_X(t)
=
\tau_0+
\int_0^t
\omega_{X(s)}\!\left(\dot X(s)\right)\,ds.
\]
Moreover, \(\tau_X\) is non-decreasing and
\begin{equation}
\dot\tau_X(t)
=
\omega_{X(t)}\!\left(\dot X(t)\right)
=
\omega_{F,X(t)}\!\left(\dot X_F(t)\right)
+
\omega_{E,X(t)}\!\left(\dot X_E(t)\right)
\in
\Lambda(X(t))
\quad
\text{for a.e. }t\in[0,T].
\label{eq:temporal-process-rate}
\end{equation}
\end{proposition}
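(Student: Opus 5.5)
The plan is to show that the integrand $s\mapsto\omega_{X(s)}(\dot X(s))$ is Lebesgue integrable on $[0,T]$. Once this holds, the three claims follow by standard arguments. Absolute continuity, uniqueness and the almost-everywhere derivative formula come from the Lebesgue differentiation theorem. Monotonicity comes from the admissibility constraint $\omega_x(v)\ge0$ built into $A$.

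\textbf{Measurability and integrability.} Since $X\in AC([0,T],M)$, the velocity $\dot X$ exists a.e. and is a measurable section of $T\overline M$ along $X$. Because $\omega$ is $C^1$ on the open set $U\supseteq M$, the map $(x,v)\mapsto\omega_x(v)$ is continuous on $TU$. The composition $s\mapsto\omega_{X(s)}(\dot X(s))$ is therefore measurable; to make this rigorous, I will cover the compact image $X([0,T])$ by finitely many charts and argue in coordinates.

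For integrability I will use the bound $|\omega_x(v)|\le\|\omega_x\|\,\|v\|$. Continuity of $\omega$ gives $K:=\sup_{x\in X([0,T])}\|\omega_x\|<\infty$, since $X([0,T])$ is compact. Assumption~A4 gives $\|\dot X(s)\|\le c(1+\rho(s))$, and the Gronwall estimate \eqref{eq:distance-bound} from the proof of Theorem~\ref{thm:existence} bounds $\rho$ on $[0,T]$. Alternatively, absolute continuity of $X$ already gives $\|\dot X\|\in L^1$. Either way the integrand is dominated by an integrable function.

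\textbf{Conclusions.} With integrability established, $\tau_X$ is well defined. It is uniquely determined by $X$ and $\tau_0$, because $\dot X$ is unique up to null sets and modifying the integrand on a null set does not change the integral. As the indefinite integral of an $L^1$ function, $\tau_X$ is absolutely continuous, and by Lebesgue differentiation $\dot\tau_X(t)=\omega_{X(t)}(\dot X(t))$ for a.e. $t$.

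The splitting $\omega_{F}(\dot X_F)+\omega_{E}(\dot X_E)$ follows from the decomposition $T_x\overline M=T_{x_F}\overline M_F\oplus T_{x_E}\overline M_E$ and linearity of $\omega_x$. Here I also use that $\dot X=(\dot X_F,\dot X_E)$ a.e., since the components of an absolutely continuous map into a product are absolutely continuous.

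Since $\dot X(t)\in A(X(t))$ a.e., the definition of $\Lambda$ in \eqref{eq:rate-spectrum} gives $\omega_{X(t)}(\dot X(t))\in\omega_{X(t)}(A(X(t)))=\Lambda(X(t))$ a.e. In particular the rate is $\ge0$ a.e., so $\tau_X(t)-\tau_X(s)=\int_s^t\dot\tau_X\ge0$ for $s\le t$, which is the monotonicity claim.

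\textbf{Main obstacle.} The only delicate step is the measure-theoretic one: making sense of $\dot X$ on a manifold and verifying that the pairing with $\omega$ is measurable and integrable. I will handle it locally through finite chart covers. Everything else is a direct consequence of the fundamental theorem of calculus for Lebesgue integrals.
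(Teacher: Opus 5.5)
Your proposal is correct and follows essentially the paper's argument: bound $|\omega_{X(t)}(\dot X(t))|$ by $C_X\|\dot X(t)\|$ using compactness of $X([0,T])$ and $\|\dot X\|\in L^1$, conclude absolute continuity and the a.e.\ derivative, and read off monotonicity and membership in $\Lambda(X(t))$ from $\dot X(t)\in A(X(t))$. Your remarks on measurability via finite chart covers and on uniqueness up to null sets supply details the paper leaves implicit, and the alternative Gronwall-based bound is unnecessary but harmless.
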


\begin{proof}
Since \(X\) is absolutely continuous, \(\|\dot X\|\in L^1([0,T])\).
Its image is compact, so continuity of \(\omega\) gives \(C_X<\infty\) with
\(\|\omega_{X(t)}\|_{\mathrm{op}}\leq C_X\). Hence
\[
|\omega_{X(t)}(\dot X(t))|
\leq C_X\|\dot X(t)\|
\in L^1([0,T]).
\]
Therefore \(\tau_X\in AC([0,T])\) and
\(\dot\tau_X(t)=\omega_{X(t)}(\dot X(t))\) a.e. Since
\(\dot X(t)\in A(X(t))\), admissibility gives
\(\dot\tau_X(t)\geq0\), so \(\tau_X\) is non-decreasing. Writing \(\dot X=(\dot X_F,\dot X_E)\) gives
\[
\dot\tau_X(t)
=
\omega_{F,X(t)}(\dot X_F(t))
+
\omega_{E,X(t)}(\dot X_E(t))
\in\Lambda(X(t))
\quad\text{a.e.},
\]
which proves \eqref{eq:temporal-process-rate}.
\end{proof}

Proposition~\ref{prop:temporal-process} shows that every admissible history
generates non-decreasing subjective financial time. The remaining question is
whether individual rates in the instantaneous spectra are dynamically
realisable.

\begin{definition}[Local \(C^1\)-velocity realisability]
\label{def:velocity-realisability}
A velocity \(v\in A(x)\) is locally \(C^1\)-realisable at \(x\) if some
\(C^1\) admissible history \(X:[0,\delta]\to M\), \(\delta>0\), satisfies
\(X(0)=x\) and \(\dot X(0)=v\). The correspondence \(A\) is locally
\(C^1\)-velocity-realisable at \(x\) if every \(v\in A(x)\) is so realisable.
\end{definition}

This strengthens Theorem~\ref{thm:existence} by requiring a specified
\(v\in A(x)\), rather than some admissible velocity, to initiate an actual
history. For neuro-evaluative comparisons, realisability must additionally be
conditional on financial motion.

\begin{definition}[Conditional neuro-evaluative realisability]
\label{def:neuro-realisability}
Let \(x\in M\) and \(u_F\in A_F(x)\). A velocity
\(v_E\in A_E(x\mid u_F)\) is locally neuro-evaluatively realisable at
\((x,u_F)\) if some \(C^1\) admissible history
\(X=(X_F,X_E):[0,\delta]\to M\), \(\delta>0\), satisfies
\[
X(0)=x,
\qquad
\dot X_F(0)=u_F,
\qquad
\dot X_E(0)=v_E.
\]
The fibre \(A_E(x\mid u_F)\) is locally neuro-evaluatively realisable if every
\(v_E\in A_E(x\mid u_F)\) is so realisable.
\end{definition}

Conditional realisability fixes financial velocity at the comparison state,
but not the subsequent financial trajectory. Matched-financial realisability,
introduced below, imposes the latter requirement. mThus \(\Delta_{\tau,E}(x\mid u_F)>0\) identifies instantaneous
neuro-evaluative rate dispersion, conditional realisability converts those
rates into local histories, and matched-financial realisability isolates their
temporal consequences under a common realised financial trajectory.

\subsection{Neuro-Evaluative Temporal Heterogeneity}
\label{subsec:neuro-temporal-heterogeneity}

Positive conditional dispersion
\(\Delta_{\tau,E}(x\mid u_F)>0\) means that, at the same complete state \(x\)
and financial velocity \(u_F\), admissible neuro-evaluative velocities carry
different instantaneous subjective-time rates. This remains a geometric
statement until the corresponding velocities are realised by histories sharing
the same financial trajectory.

\begin{definition}[Financially matched neuro-evaluative realisability]
\label{def:matched-neuro-realisability}

Let \(x=(x_F,x_E)\in M\), \(u_F\in A_F(x)\), and
\(v_E^1,v_E^2\in A_E(x\mid u_F)\). The pair
\((v_E^1,v_E^2)\) is \emph{locally financially matched and
neuro-evaluatively realisable} at \((x,u_F)\) if there exist
\(\delta>0\) and \(C^1\) admissible histories
\(X^i=(X_F^i,X_E^i):[0,\delta]\rightarrow M\), \(i=1,2\), such that
\begin{equation}
X^1(0)=X^2(0)=x,
\qquad
X_F^1(t)=X_F^2(t)
\quad
\text{for all }t\in[0,\delta],
\label{eq:matched-financial-path}
\end{equation}
and
\begin{equation}
\dot X_F^1(0)=\dot X_F^2(0)=u_F,
\qquad
\dot X_E^i(0)=v_E^i,
\qquad i=1,2.
\label{eq:matched-neuro-velocities}
\end{equation}
\end{definition}

This condition is stronger than
Definition~\ref{def:neuro-realisability}: it requires both prescribed initial
neuro-evaluative velocities and a common realised financial path over a
non-degenerate interval. It is sufficient, but not asserted to be necessary,
for the comparative result below.

\begin{theorem}[Neuro-evaluative temporal divergence under matched financial dynamics]
\label{thm:neuro-temporal-divergence}

Let \(x=(x_F,x_E)\in M\) and \(u_F\in A_F(x)\). Suppose
\begin{equation}
\Delta_{\tau,E}(x\mid u_F)>0.
\label{eq:positive-neuro-dispersion}
\end{equation}
Then there exist
\(v_E^-,v_E^+\in A_E(x\mid u_F)\) satisfying
\begin{align}
\omega_x(u_F,v_E^-)
&=
\underline\lambda_E(x\mid u_F),
\label{eq:conditional-minimizer}
\\
\omega_x(u_F,v_E^+)
&=
\overline\lambda_E(x\mid u_F).
\label{eq:conditional-maximizer}
\end{align}
If \((v_E^-,v_E^+)\) is locally financially matched and neuro-evaluatively
realisable at \((x,u_F)\), let
\(X^-=(X_F,X_E^-)\) and \(X^+=(X_F,X_E^+)\) be corresponding histories with
\(\tau_{X^-}(0)=\tau_{X^+}(0)=\tau_0\). Then
\begin{equation}
\dot\tau_{X^+}(0)-\dot\tau_{X^-}(0)
=
\omega_{E,x}(v_E^+)-\omega_{E,x}(v_E^-)
=
\Delta_{\tau,E}(x\mid u_F)
>0.
\label{eq:initial-neuro-rate-gap}
\end{equation}
Moreover, there exists \(\delta_0\in(0,\delta]\) such that
\begin{equation}
\tau_{X^+}(t)-\tau_{X^-}(t)
\geq
\frac{1}{2}
\Delta_{\tau,E}(x\mid u_F)t
>0
\qquad
\text{for every }t\in(0,\delta_0].
\label{eq:persistent-neuro-divergence}
\end{equation}

Thus dynamically realisable differences in neuro-evaluative motion can
generate different subjective financial times from the same complete initial
state while the realised financial trajectory remains identical.
\end{theorem}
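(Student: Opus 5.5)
The argument has three steps: existence of extremisers by compactness, an initial rate comparison, and a first-order integral estimate.

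\emph{Existence of $v_E^\pm$.} The fibre $A_E(x\mid u_F)$ is non-empty, compact and convex, being the preimage of $A(x)$ under the affine map $v_E\mapsto(u_F,v_E)$. The map $v_E\mapsto\omega_x(u_F,v_E)=\omega_{F,x}(u_F)+\omega_{E,x}(v_E)$ is continuous and affine. Its image is therefore the compact interval $\Lambda_E(x\mid u_F)$, and both endpoints are attained. Choose $v_E^-$ and $v_E^+$ as a minimiser and a maximiser; this gives \eqref{eq:conditional-minimizer}--\eqref{eq:conditional-maximizer}.

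\emph{Initial rate gap \eqref{eq:initial-neuro-rate-gap}.} Take the matched $C^1$ histories $X^\pm$ supplied by Definition~\ref{def:matched-neuro-realisability}. By Proposition~\ref{prop:temporal-process}, $\dot\tau_{X^\pm}(t)=\omega_{X^\pm(t)}(\dot X^\pm(t))$ almost everywhere. Since $X^\pm$ is $C^1$ and $\omega$ is $C^1$ (in particular continuous), the integrand $g_\pm(t)=\omega_{X^\pm(t)}(\dot X^\pm(t))$ is continuous on $[0,\delta]$. Hence $\tau_{X^\pm}$ is $C^1$ and the identity holds at every $t$, including $t=0$. At $t=0$ both histories sit at $x$ with $\dot X^\pm(0)=(u_F,v_E^\pm)$. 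So $\dot\tau_{X^\pm}(0)=\omega_{F,x}(u_F)+\omega_{E,x}(v_E^\pm)$. The financial terms cancel, and the difference equals $\overline\lambda_E-\underline\lambda_E=\Delta_{\tau,E}(x\mid u_F)$, which is positive by \eqref{eq:positive-neuro-dispersion}.

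\emph{Persistence \eqref{eq:persistent-neuro-divergence}.} Set $h(t)=g_+(t)-g_-(t)$. This function is continuous on $[0,\delta]$ with $h(0)=\Delta_{\tau,E}(x\mid u_F)>0$. By continuity there is $\delta_0\in(0,\delta]$ with $h(t)\ge\frac12\Delta_{\tau,E}(x\mid u_F)$ on $[0,\delta_0]$. Both processes share the normalisation $\tau_0$, so
\[
\tau_{X^+}(t)-\tau_{X^-}(t)=\int_0^t h(s)\,ds\ \ge\ \tfrac12\Delta_{\tau,E}(x\mid u_F)\,t>0
\qquad (0<t\le\delta_0).
\]
The final sentence of the theorem then follows. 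The histories start from the same complete state $x$ and satisfy $X_F^+=X_F^-$ on $[0,\delta]$ by \eqref{eq:matched-financial-path}, yet their subjective times differ strictly on $(0,\delta_0]$.

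\emph{Main obstacle.} The only delicate point is justifying continuity of $h$ at $0$, so that the almost-everywhere rate identity can be evaluated pointwise at $t=0$. This is precisely why the realisability definitions demand $C^1$ histories, not merely absolutely continuous ones. Continuity comes from writing $\omega$ in local coordinates around $x$: on a chart, $\omega_y(v)=\sum_i a_i(y)v^i$ with continuous coefficients $a_i$. Composing with the continuous maps $t\mapsto X^\pm(t)$ and $t\mapsto\dot X^\pm(t)$ gives joint continuity of $g_\pm$, after shrinking $\delta$ if necessary so that both curves stay in the chart domain.
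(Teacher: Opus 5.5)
Your proposal is correct and follows essentially the same route as the paper. The paper likewise obtains the extremisers $v_E^\pm$ from compactness of the fibre and affinity of $v_E\mapsto\omega_x(u_F,v_E)$, cancels the common term $\omega_{F,x}(u_F)$ at the shared initial state, uses continuity of the realised rates $r_\pm(t)=\omega_{X^\pm(t)}(\dot X^\pm(t))$ coming from $C^1$ regularity of the matched histories, and integrates the half-gap lower bound on $[0,\delta_0]$. Your one addition is to justify explicitly, via the fundamental theorem of calculus for a continuous integrand, that the almost-everywhere rate identity holds at $t=0$, a step the paper leaves implicit in this proof.
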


\begin{proof}

Because \(u_F\in A_F(x)\), the fibre \(A_E(x\mid u_F)\) is non-empty.
Compactness and convexity of \(A(x)\) imply compactness and convexity of this
fibre. The map \(v_E\mapsto\omega_x(u_F,v_E)\) is continuous and affine, so it
attains its minimum and maximum at some \(v_E^-\) and \(v_E^+\). Hence
\begin{align}
\Delta_{\tau,E}(x\mid u_F)
&=
\omega_x(u_F,v_E^+)-\omega_x(u_F,v_E^-)
\nonumber\\
&=
\omega_{E,x}(v_E^+)-\omega_{E,x}(v_E^-),
\label{eq:neuro-dynamic-gap}
\end{align}
because the common financial term \(\omega_{F,x}(u_F)\) cancels. By matched neuro-evaluative realisability, there exist \(C^1\) admissible
histories \(X^-\) and \(X^+\) satisfying
\(X_F^-(t)=X_F^+(t)\) on \([0,\delta]\) and
\(\dot X^\pm(0)=(u_F,v_E^\pm)\). Define
\[
r_\pm(t)
:=
\omega_{X^\pm(t)}\!\left(\dot X^\pm(t)\right).
\]
Since \(X^\pm\) are \(C^1\) and \(\omega\) is continuous, \(r_\pm\) are
continuous. Proposition~\ref{prop:temporal-process} and
\eqref{eq:neuro-dynamic-gap} give
\[
r_+(0)-r_-(0)
=
\dot\tau_{X^+}(0)-\dot\tau_{X^-}(0)
=
\Delta_{\tau,E}(x\mid u_F)>0.
\]
Continuity therefore yields \(\delta_0\in(0,\delta]\) such that
\[
r_+(t)-r_-(t)
\geq
\frac{1}{2}\Delta_{\tau,E}(x\mid u_F)
\qquad
\text{for all }t\in[0,\delta_0].
\]
Using the common temporal normalisation,
\[
\begin{aligned}
\tau_{X^+}(t)-\tau_{X^-}(t)
&=
\int_0^t
\bigl[r_+(s)-r_-(s)\bigr]\,ds
\\
&\geq
\frac{1}{2}\Delta_{\tau,E}(x\mid u_F)t
>0,
\end{aligned}
\]
which proves \eqref{eq:persistent-neuro-divergence}.
\end{proof}

Theorem~\ref{thm:neuro-temporal-divergence} isolates the source of the initial
temporal separation exactly. Because the histories start from the same
complete state and share the same financial velocity, the term
\(\omega_{F,x}(u_F)\) cancels; the initial rate gap is entirely
neuro-evaluative. Once \(X_E^+(t)\) and \(X_E^-(t)\) separate, the full states
also differ, so subsequent divergence may operate both directly through
\(\omega_E\) and indirectly through the dependence of \(\omega_F\) on the
neuro-evaluative state. The financial path nevertheless remains common, so the
difference originates in neuro-evaluative dynamics.

\begin{corollary}[Local non-reducibility to financial histories]
\label{cor:neuro-nonreducibility}

Under the hypotheses of
Theorem~\ref{thm:neuro-temporal-divergence}, subjective financial time is not
locally determined by the realised financial history alone. In particular,
there is no functional \(\mathfrak T_F\) of the common financial-history
segment such that
\begin{equation}
\tau_X(t)-\tau_0
=
\mathfrak T_F\!\left(X_F|_{[0,t]}\right)
\label{eq:financial-only-representation}
\end{equation}
holds for both \(X=X^+\) and \(X=X^-\) for every sufficiently small \(t>0\).
\end{corollary}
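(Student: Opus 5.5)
The plan is to argue by contradiction, using only the matched-financial structure and the strict divergence estimate \eqref{eq:persistent-neuro-divergence} from Theorem~\ref{thm:neuro-temporal-divergence}. Take the histories \(X^+=(X_F,X_E^+)\) and \(X^-=(X_F,X_E^-)\) supplied by that theorem. They are defined on \([0,\delta]\), have the common temporal normalisation \(\tau_0\), and satisfy \(X_F^+(t)=X_F^-(t)=X_F(t)\) for all \(t\in[0,\delta]\) by \eqref{eq:matched-financial-path}. The key observation is that, for each \(t\in(0,\delta]\), the restrictions \(X_F^+|_{[0,t]}\) and \(X_F^-|_{[0,t]}\) are literally the same function on \([0,t]\). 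Consequently any functional \(\mathfrak T_F\) of the financial-history segment, whatever its form, takes the same value on them.

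Suppose, for contradiction, that such a \(\mathfrak T_F\) exists. Then there is some \(t_1>0\) such that \eqref{eq:financial-only-representation} holds for both \(X^+\) and \(X^-\) at every \(t\in(0,t_1]\). Subtracting the two identities gives
\[
\tau_{X^+}(t)-\tau_{X^-}(t)
=
\mathfrak T_F\!\left(X_F|_{[0,t]}\right)-\mathfrak T_F\!\left(X_F|_{[0,t]}\right)
=0
\]
for every \(t\in(0,t_1]\). Now choose \(t\in(0,\min\{t_1,\delta_0\}]\), with \(\delta_0\) as in Theorem~\ref{thm:neuro-temporal-divergence}. Then \eqref{eq:persistent-neuro-divergence} yields \(\tau_{X^+}(t)-\tau_{X^-}(t)\geq\tfrac12\Delta_{\tau,E}(x\mid u_F)\,t>0\), which is a contradiction. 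Hence no such \(\mathfrak T_F\) exists, and subjective financial time is not locally determined by the realised financial history alone.

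I do not expect a genuine obstacle here, because the analytic work is already contained in the theorem. The only point needing care is the quantifier in ``for every sufficiently small \(t>0\)''. One has to intersect the interval on which the putative representation holds with \((0,\delta_0]\), and check that this intersection is non-empty; it is, since both are intervals of positive length starting at \(0\). It is also worth stating that \(\mathfrak T_F\) is allowed to be completely arbitrary, with no continuity or measurability assumed. The argument uses only that \(\mathfrak T_F\) is a function of the restricted path, together with the pointwise equality \(X_F^+=X_F^-\) on the whole interval. Equality of the initial financial velocities alone would not suffice.
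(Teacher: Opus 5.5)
Your proof is correct and follows the same route as the paper. In both, the identical restrictions \(X_F^+|_{[0,t]}=X_F^-|_{[0,t]}\) force \(\tau_{X^+}(t)=\tau_{X^-}(t)\) under any financial-only functional, contradicting the strict lower bound in Theorem~\ref{thm:neuro-temporal-divergence}; your explicit intersection of the representation interval with \((0,\delta_0]\) simply makes precise the quantifier handling that the paper leaves implicit.
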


\begin{proof}

For every \(t\in[0,\delta_0]\),
\(X_F^+|_{[0,t]}=X_F^-|_{[0,t]}\). If
\eqref{eq:financial-only-representation} held for both histories, then
\(\tau_{X^+}(t)=\tau_{X^-}(t)\), contradicting
\eqref{eq:persistent-neuro-divergence}.
\end{proof}

The corollary gives the model's non-reducibility result: conditional on the
theorem's hypotheses, financial observables alone are insufficient to recover
subjective financial time. Theorem~\ref{thm:neuro-temporal-divergence}
therefore identifies a \emph{neuro-dynamic effect}: the complete initial state
and realised financial path are fixed, while different neuro-evaluative
velocities generate different temporal histories. This is distinct from a
\emph{neuro-state effect}, in which financial conditions are fixed but the
current neuro-evaluative states themselves differ; that mechanism is considered
next.

\subsection{Dynamic Representation of Temporal Dispersion}
\label{subsec:dispersion-representation}

The spectrum
\(\Lambda(x)=\omega_x(A(x))\) and its diameter
\(\Delta_\tau(x)=\overline\lambda(x)-\underline\lambda(x)\)
describe instantaneous admissible temporal-rate variation. Under local
\(C^1\)-velocity realisability, these geometric objects admit an exact
representation through realised admissible histories.

For \(x\in M\), let
\[
\mathcal R^1(x)
:=
\left\{
X:
\begin{array}{l}
X\in C^1([0,\delta],\overline M)
\text{ for some }\delta>0,\\
X(0)=x,\quad X([0,\delta])\subseteq M,\\
\dot X(t)\in A(X(t))
\text{ for every }t\in[0,\delta]
\end{array}
\right\}.
\]
For every \(X\in\mathcal R^1(x)\),
Proposition~\ref{prop:temporal-process} strengthens pointwise to
\[
\dot\tau_X(t)
=
\omega_{X(t)}\!\left(\dot X(t)\right)
\qquad
\text{for every }t
\]
on the interval of definition.

\begin{theorem}[Dynamic representation of the admissible rate spectrum]
\label{thm:dispersion}

Let \(x_0\in M\), and suppose that \(A\) is locally
\(C^1\)-velocity-realisable at \(x_0\). Then
\begin{equation}
\Lambda(x_0)
=
\left\{
\dot\tau_X(0):
X\in\mathcal R^1(x_0)
\right\}.
\label{eq:representation-spectrum}
\end{equation}
Consequently,
\begin{equation}
\Delta_\tau(x_0)
=
\operatorname{diam}
\left\{
\dot\tau_X(0):
X\in\mathcal R^1(x_0)
\right\}.
\label{eq:representation-dispersion}
\end{equation}

Moreover, there exist
\(X_-,X_+\in\mathcal R^1(x_0)\) such that
\[
\dot\tau_{X_-}(0)
=
\underline\lambda(x_0),
\qquad
\dot\tau_{X_+}(0)
=
\overline\lambda(x_0),
\]
and, under the common temporal normalisation
\(\tau_{X_-}(0)=\tau_{X_+}(0)=\tau_0\),
\begin{equation}
\tau_{X_+}(t)-\tau_{X_-}(t)
=
\Delta_\tau(x_0)t+o(t)
\qquad
(t\downarrow0).
\label{eq:dispersion-expansion}
\end{equation}
Hence
\[
\Delta_\tau(x_0)
=
\lim_{t\downarrow0}
\frac{
\tau_{X_+}(t)-\tau_{X_-}(t)
}{t}.
\]
\end{theorem}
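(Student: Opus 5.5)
We prove the set equality \eqref{eq:representation-spectrum} by double inclusion, using the pointwise strengthening of Proposition~\ref{prop:temporal-process} available on \(\mathcal R^1(x_0)\).

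For the inclusion \(\supseteq\), fix \(\lambda\in\Lambda(x_0)=\omega_{x_0}(A(x_0))\) and pick \(v\in A(x_0)\) with \(\omega_{x_0}(v)=\lambda\). Local \(C^1\)-velocity realisability (Definition~\ref{def:velocity-realisability}) gives a \(C^1\) admissible history \(X\) with \(X(0)=x_0\) and \(\dot X(0)=v\). Here a small point needs checking. Definition~\ref{def:velocity-realisability} speaks of a \(C^1\) admissible history, with \(\dot X\in A(X)\) a.e., whereas \(\mathcal R^1(x_0)\) requires the inclusion at every \(t\).
\begin{itemize}
\item Wherever \(X\) is \(C^1\) and satisfies the inclusion a.e., the velocity \(\dot X(t)\) is continuous and is a limit of velocities \(\dot X(t_n)\in A(X(t_n))\), with \(t_n\to t\) and \(X(t_n)\to X(t)\).
\item Upper hemicontinuity of \(A\), together with closedness of its compact values (Assumption~A4), gives closed graph. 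Hence \(\dot X(t)\in A(X(t))\) for every \(t\), and \(X\in\mathcal R^1(x_0)\).
\end{itemize}
Then \(\dot\tau_X(0)=\omega_{x_0}(\dot X(0))=\omega_{x_0}(v)=\lambda\).

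For the inclusion \(\subseteq\), any \(X\in\mathcal R^1(x_0)\) satisfies \(\dot X(0)\in A(x_0)\) by the everywhere inclusion. Therefore \(\dot\tau_X(0)=\omega_{x_0}(\dot X(0))\in\Lambda(x_0)\).

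Identity \eqref{eq:representation-dispersion} follows immediately by taking diameters of both sides.

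For the extremal histories, note that \(A(x_0)\) is compact and \(v\mapsto\omega_{x_0}(v)\) is continuous and linear. Its minimum and maximum are therefore attained at some \(v_-,v_+\in A(x_0)\), which is the content of \eqref{eq:rate-spectrum}. Applying the construction of the \(\supseteq\) step to \(v_\pm\) yields \(X_\pm\in\mathcal R^1(x_0)\) with \(\dot\tau_{X_\pm}(0)=\underline\lambda(x_0)\) and \(\dot\tau_{X_+}(0)=\overline\lambda(x_0)\) respectively. We may restrict both histories to a common interval \([0,\delta]\).

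For the expansion \eqref{eq:dispersion-expansion}, set \(r_\pm(t)=\omega_{X_\pm(t)}(\dot X_\pm(t))\). These functions are continuous, because \(\omega\) is \(C^1\) and \(X_\pm\) are \(C^1\), exactly as in the proof of Theorem~\ref{thm:neuro-temporal-divergence}. With the common normalisation,
\[
\tau_{X_+}(t)-\tau_{X_-}(t)=\int_0^t\bigl[r_+(s)-r_-(s)\bigr]\,ds .
\]
By the fundamental theorem of calculus for continuous integrands, this equals \(\bigl(r_+(0)-r_-(0)\bigr)t+o(t)=\Delta_\tau(x_0)t+o(t)\). Dividing by \(t\) and letting \(t\downarrow0\) gives the limit formula.

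The only step that is not routine is the upgrade from the a.e. inclusion to the everywhere inclusion, which is needed to land in \(\mathcal R^1(x_0)\). We expect this to be the main obstacle and handle it with the closed-graph argument above. If a reader prefers to avoid that argument, the alternative is to read Definition~\ref{def:velocity-realisability} as already providing histories in \(\mathcal R^1\).
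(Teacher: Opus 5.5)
Your proposal is correct and follows the paper's proof step for step: double inclusion via local $C^1$-velocity realisability, attainment of $\underline\lambda(x_0)$ and $\overline\lambda(x_0)$ by compactness of $A(x_0)$, and the $o(t)$ expansion from continuity of $r_\pm$. Your closed-graph argument, which upgrades the a.e.\ inclusion to an everywhere inclusion, is a valid refinement: the paper simply asserts that realisability yields a history in $\mathcal R^1(x_0)$ and never addresses this point.
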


\begin{proof}

Let \(X\in\mathcal R^1(x_0)\). Since
\(\dot X(0)\in A(x_0)\), Proposition~\ref{prop:temporal-process} gives
\[
\dot\tau_X(0)
=
\omega_{x_0}\!\left(\dot X(0)\right)
\in\Lambda(x_0).
\]
Hence
\[
\left\{
\dot\tau_X(0):
X\in\mathcal R^1(x_0)
\right\}
\subseteq
\Lambda(x_0).
\]
Conversely, let \(\lambda\in\Lambda(x_0)\). By
\(\Lambda(x_0)=\omega_{x_0}(A(x_0))\), there exists
\(v\in A(x_0)\) such that
\[
\lambda=\omega_{x_0}(v).
\]
Local \(C^1\)-velocity realisability at \(x_0\) yields
\(X\in\mathcal R^1(x_0)\) with \(\dot X(0)=v\). Therefore,
Proposition~\ref{prop:temporal-process} gives
\[
\dot\tau_X(0)
=
\omega_{x_0}(v)
=
\lambda.
\]
Thus
\[
\Lambda(x_0)
\subseteq
\left\{
\dot\tau_X(0):
X\in\mathcal R^1(x_0)
\right\},
\]
which proves \eqref{eq:representation-spectrum}. Taking diameters gives
\eqref{eq:representation-dispersion}. Because \(A(x_0)\) is compact and
\(v\mapsto\omega_{x_0}(v)\) is continuous, there exist
\(v_-,v_+\in A(x_0)\) satisfying
\[
\omega_{x_0}(v_-)
=
\underline\lambda(x_0),
\qquad
\omega_{x_0}(v_+)
=
\overline\lambda(x_0).
\]
Local \(C^1\)-velocity realisability provides
\(X_-,X_+\in\mathcal R^1(x_0)\) with
\(\dot X_-(0)=v_-\) and \(\dot X_+(0)=v_+\). Hence
\[
\dot\tau_{X_-}(0)
=
\underline\lambda(x_0),
\qquad
\dot\tau_{X_+}(0)
=
\overline\lambda(x_0).
\]
Define
\[
r_\pm(t)
=
\omega_{X_\pm(t)}
\!\left(\dot X_\pm(t)\right).
\]
Since \(X_\pm\) are \(C^1\) and \(\omega\) is continuous,
\(r_\pm\) are continuous at \(t=0\). Therefore
\[
r_\pm(t)=r_\pm(0)+o(1)
\qquad
(t\downarrow0),
\]
and hence
\[
r_+(t)-r_-(t)
=
\Delta_\tau(x_0)+o(1).
\]
Using the common temporal normalisation,
\[
\begin{aligned}
\tau_{X_+}(t)-\tau_{X_-}(t)
&=
\int_0^t
\left[r_+(s)-r_-(s)\right]\,ds \\
&=
\Delta_\tau(x_0)t+o(t),
\end{aligned}
\]
which proves \eqref{eq:dispersion-expansion}. Dividing by \(t>0\) and taking
\(t\downarrow0\) gives
\[
\Delta_\tau(x_0)
=
\lim_{t\downarrow0}
\frac{\tau_{X_+}(t)-\tau_{X_-}(t)}{t},
\]
completing the proof.
\end{proof}

\ 
\begin{corollary}[Realised instantaneous temporal multiplicity]
\label{cor:temporal-multiplicity}

Suppose \(A\) is locally \(C^1\)-velocity-realisable at \(x_0\). Then the
following are equivalent:
\begin{enumerate}
\item \(\Delta_\tau(x_0)>0\);
\item there exist \(X_1,X_2\in\mathcal R^1(x_0)\) such that
\(\dot\tau_{X_1}(0)\neq\dot\tau_{X_2}(0)\);
\item there exist \(X_-,X_+\in\mathcal R^1(x_0)\) such that
\[
\lim_{t\downarrow0}
\frac{\tau_{X_+}(t)-\tau_{X_-}(t)}{t}>0.
\]
\end{enumerate}
\end{corollary}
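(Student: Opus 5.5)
The plan is to prove the equivalence as the cycle (1)\(\Rightarrow\)(3)\(\Rightarrow\)(2)\(\Rightarrow\)(1), with every step resting on Theorem~\ref{thm:dispersion}. Local \(C^1\)-velocity realisability at \(x_0\) is assumed throughout, so that theorem gives both the representation \eqref{eq:representation-spectrum} of \(\Lambda(x_0)\) as the set of realised initial rates and the limit formula for \(\Delta_\tau(x_0)\).

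For (1)\(\Rightarrow\)(3), I will take the histories \(X_-,X_+\in\mathcal R^1(x_0)\) supplied by Theorem~\ref{thm:dispersion}, whose initial rates equal \(\underline\lambda(x_0)\) and \(\overline\lambda(x_0)\), and impose the common normalisation \(\tau_0\). By \eqref{eq:dispersion-expansion},
\[
\lim_{t\downarrow0}\frac{\tau_{X_+}(t)-\tau_{X_-}(t)}{t}=\Delta_\tau(x_0)>0 .
\]
This gives (3) directly.

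For (3)\(\Rightarrow\)(2), let \(X_\pm\) be arbitrary histories in \(\mathcal R^1(x_0)\) satisfying (3). As noted before Theorem~\ref{thm:dispersion}, for \(C^1\) admissible histories we have \(\dot\tau_X(t)=\omega_{X(t)}(\dot X(t))\) for every \(t\), and this rate is continuous in \(t\). The same integral argument used in the proof of Theorem~\ref{thm:dispersion} then gives
\[
\lim_{t\downarrow0}\frac{\tau_{X_+}(t)-\tau_{X_-}(t)}{t}=\dot\tau_{X_+}(0)-\dot\tau_{X_-}(0).
\]
Since this limit is positive, \(\dot\tau_{X_+}(0)\neq\dot\tau_{X_-}(0)\), which is (2) with \(X_1=X_+\) and \(X_2=X_-\). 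One small point needs attention here. Statement (3) does not say the two histories share the normalisation \(\tau_0\), and an initial offset would destroy the limit. I will either read the common normalisation as part of the convention of temporal equivalence, or remark that an offset makes the difference quotient blow up to \(\pm\infty\), so finiteness and positivity of the limit force \(\tau_{X_+}(0)=\tau_{X_-}(0)\).

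For (2)\(\Rightarrow\)(1), the inclusion direction of \eqref{eq:representation-spectrum}, or simply Proposition~\ref{prop:temporal-process}, gives \(\dot\tau_{X_i}(0)=\omega_{x_0}(\dot X_i(0))\in\Lambda(x_0)\) for \(i=1,2\). Two distinct points then lie in \(\Lambda(x_0)\), so its diameter \(\Delta_\tau(x_0)\) is positive. This step does not actually need realisability.

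The main obstacle is purely the normalisation issue in (3)\(\Rightarrow\)(2). The remaining steps are immediate rewrites of Theorem~\ref{thm:dispersion}.
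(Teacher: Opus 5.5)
Your proposal is correct and follows essentially the paper's argument: (1)$\Rightarrow$(3) comes from the expansion \eqref{eq:dispersion-expansion}, (3) forces distinct initial rates, and two distinct initial rates in $\Lambda(x_0)$ give $\Delta_\tau(x_0)>0$ via \eqref{eq:representation-dispersion}; your cycle simply routes the paper's direct (1)$\Leftrightarrow$(2) through (3). Your remark that (3) needs a common normalisation $\tau_{X_+}(0)=\tau_{X_-}(0)$, or a finite limit, to exclude a trivially $+\infty$ quotient is correct, and the paper's proof uses this implicitly without stating it.
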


\begin{proof}

The equivalence of (1) and (2) follows from
\eqref{eq:representation-dispersion}. If
\(\Delta_\tau(x_0)>0\), Theorem~\ref{thm:dispersion} provides
\(X_-,X_+\in\mathcal R^1(x_0)\) such that
\[
\tau_{X_+}(t)-\tau_{X_-}(t)
=
\Delta_\tau(x_0)t+o(t).
\]
Hence
\[
\lim_{t\downarrow0}
\frac{\tau_{X_+}(t)-\tau_{X_-}(t)}{t}
=
\Delta_\tau(x_0)>0,
\]
which proves (3). Conversely, (3) implies
\[
\dot\tau_{X_+}(0)-\dot\tau_{X_-}(0)>0,
\]
so the initial temporal rates are distinct. By
\eqref{eq:representation-dispersion},
\(\Delta_\tau(x_0)>0\).
\end{proof}

\begin{remark}[Interpretation and scope of temporal dispersion]
\label{rem:temporal-dispersion-scope} The quantity \(\Delta_\tau(x)\) measures total instantaneous temporal-rate
dispersion across admissible velocities, whereas
\(\Delta_{\tau,E}(x\mid u_F)\) holds financial velocity fixed and isolates
dispersion generated by admissible neuro-evaluative motion. Since
\(\Lambda_E(x\mid u_F)\subseteq\Lambda(x)\),
\(\Delta_{\tau,E}(x\mid u_F)\leq\Delta_\tau(x)\), so
\(\Delta_\tau(x)>0\) alone does not identify a neuro-evaluative source.
Both quantities are local and first-order. They characterise initial
temporal-rate separation, while finite-horizon divergence depends on the
subsequent evolution of the state, admissible correspondence, and valuation
one-form. Path dependence is distinct and is governed locally by \(d\omega\),
as established below.
\end{remark}

\subsection{Local Structure of Subjective Financial Time}
\label{subsec:local-structure}

The preceding results identify two distinct sources of temporal structure.
The spectrum \(\Lambda(x)\) describes instantaneous subjective-time rates,
with \(\Delta_\tau(x)\) measuring their total dispersion and
\(\Delta_{\tau,E}(x\mid u_F)\) the dispersion remaining after financial motion
is fixed. By contrast, \(d\omega\) governs the local dependence of accumulated
subjective time on the path through financial--neuro-evaluative state space.
The following theorem collects these results and establishes the corresponding
local curvature representation.

For a piecewise \(C^1\) curve \(\gamma:[a,b]\rightarrow U\), define its
valuation-induced temporal accumulation by
\begin{equation}
\mathcal T_\omega(\gamma)
:=
\int_\gamma\omega.
\label{eq:temporal-line-integral}
\end{equation}

\begin{theorem}[Local structure of subjective financial time]
\label{thm:local-structure}

Let \(x\in M\), and suppose that \(A\) is locally
\(C^1\)-velocity-realisable at \(x\). Then

\begin{enumerate}

\item[(i)] The total instantaneous temporal dispersion satisfies
\[
\Delta_\tau(x)
=
\operatorname{diam}
\left\{
\dot\tau_X(0):
X\in\mathcal R^1(x)
\right\}.
\]

\item[(ii)] For every \(u_F\in A_F(x)\),
\[
\Lambda_E(x\mid u_F)\subseteq\Lambda(x),
\qquad
\Delta_{\tau,E}(x\mid u_F)\leq\Delta_\tau(x).
\]

\item[(iii)] Let \(V\subseteq U\) be a sufficiently small contractible
neighbourhood, and let \(\gamma_1,\gamma_2\subset V\) be piecewise \(C^1\)
curves with common initial and terminal points such that the closed curve
\(\gamma_1-\gamma_2\) bounds an oriented piecewise smooth surface
\(S\subset V\). Then
\begin{equation}
\mathcal T_\omega(\gamma_1)
-
\mathcal T_\omega(\gamma_2)
=
\int_S d\omega.
\label{eq:local-structure-curvature}
\end{equation}
Thus \(d\omega\) measures the local failure of endpoint-only temporal
accumulation.

\item[(iv)] If \(d\omega=0\) on a contractible neighbourhood \(V\), then
there exists \(\Theta\in C^2(V)\), unique up to an additive constant, such
that
\[
\omega|_V=d\Theta.
\]
Consequently, every piecewise \(C^1\) curve
\(\gamma:[a,b]\rightarrow V\) satisfies
\begin{equation}
\mathcal T_\omega(\gamma)
=
\Theta(\gamma(b))-\Theta(\gamma(a)).
\label{eq:local-state-time}
\end{equation}

\end{enumerate}

Hence instantaneous temporal-rate heterogeneity and historical path dependence
are mathematically distinct. The former is represented by \(\Lambda\), while
the latter is governed locally by \(d\omega\).
\end{theorem}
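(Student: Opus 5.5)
The theorem collects four separate claims, and I will prove each in turn, relying mostly on earlier results and on standard exterior calculus.

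\emph{Part (i).} This is exactly \eqref{eq:representation-dispersion} of Theorem~\ref{thm:dispersion}. Its only hypothesis is local \(C^1\)-velocity realisability at \(x\), which is assumed here, so I will simply invoke it.

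\emph{Part (ii).} Fix \(u_F\in A_F(x)\). Every \(v_E\in A_E(x\mid u_F)\) gives \((u_F,v_E)\in A(x)\) by definition of the fibre, so \(\omega_x(u_F,v_E)\in\omega_x(A(x))=\Lambda(x)\). This proves the inclusion. Both sets are compact intervals:
\begin{itemize}
\item \(\Lambda(x)\) by \eqref{eq:rate-spectrum};
\item \(\Lambda_E(x\mid u_F)\), because it is the continuous affine image of a nonempty compact convex fibre, as in the proof of Theorem~\ref{thm:neuro-temporal-divergence}.
\end{itemize}
Monotonicity of the diameter under inclusion then gives \(\Delta_{\tau,E}(x\mid u_F)\le\Delta_\tau(x)\).

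\emph{Part (iii).} The plan is to apply Stokes' theorem to the \(C^1\) one-form \(\omega\) on the piecewise smooth oriented surface \(S\subset V\) with \(\partial S=\gamma_1-\gamma_2\). This gives \(\int_{\gamma_1}\omega-\int_{\gamma_2}\omega=\int_{\partial S}\omega=\int_S d\omega\), which is \eqref{eq:local-structure-curvature} after recalling \eqref{eq:temporal-line-integral}.

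I expect the main obstacle here to be regularity.
\begin{itemize}
\item Since \(\omega\) is only \(C^1\), \(d\omega\) is merely continuous. The version of Stokes' theorem needed is therefore the one for \(C^1\) forms on piecewise \(C^1\) chains.
\item To apply it, I would first choose \(V\) inside a single chart; this is where ``sufficiently small'' is used.
\item I would then represent \(S\) as a finite sum of piecewise \(C^1\) singular 2-simplices or squares, with interior edges cancelling.
\item Stokes for \(C^1\) forms on each \(C^1\) square follows from the fundamental theorem of calculus and Fubini in coordinates, since \(d\omega\) involves only first derivatives of the coefficients.
\item Corners and piecewise pieces are handled by additivity.
\end{itemize}

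\emph{Part (iv).} If \(d\omega=0\) on contractible \(V\), the Poincaré lemma gives \(\Theta\) with \(d\Theta=\omega\). To keep within \(C^1\) regularity, I would construct \(\Theta\) explicitly via a contraction \(H:[0,1]\times V\to V\) of class \(C^2\), or in a chart via the radial homotopy on a star-shaped image, and set \(\Theta(y)=\int_{\gamma_y}\omega\) along a fixed \(C^1\) path family from a base point.
\begin{itemize}
\item Well-definedness of \(\Theta\) follows from (iii) with \(d\omega=0\): integrals over two paths with the same endpoints agree.
\item Differentiating under the integral sign gives \(d\Theta=\omega\). Since \(\omega\in C^1\), this yields \(\Theta\in C^2(V)\).
\item Uniqueness up to a constant holds because \(d(\Theta_1-\Theta_2)=0\) on connected \(V\).
\end{itemize}
Finally, \eqref{eq:local-state-time} follows from the fundamental theorem of calculus along the piecewise \(C^1\) curve \(\gamma\), applied to \(t\mapsto\Theta(\gamma(t))\), whose derivative is \(\omega_{\gamma(t)}(\dot\gamma(t))\) off finitely many points. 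This is the same computation as the exact case noted after the definition of valuation geometry.
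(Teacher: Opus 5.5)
Your proposal is correct and follows the paper's proof essentially step for step: (i) is read off from Theorem~\ref{thm:dispersion}, (ii) from the fibre inclusion and monotonicity of the diameter, (iii) from Stokes' theorem with \(\partial S=\gamma_1-\gamma_2\), and (iv) from the Poincar\'e lemma, the fundamental theorem of calculus along \(\gamma\), and connectedness of \(V\) for uniqueness; your regularity remarks only make explicit what the paper leaves to the cited theorems. One minor tidy-up: with a fixed path family \(\gamma_y\), \(\Theta\) is automatically well defined and the differentiation under the integral sign already gives \(d\Theta=\omega\), so the appeal to (iii) is unnecessary; it would also not apply as stated, because in (iv) \(V\) need not be small and two arbitrary paths need not bound an embedded surface \(S\subset V\).
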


\begin{proof}

Statement (i) follows directly from
Theorem~\ref{thm:dispersion} and
\eqref{eq:representation-dispersion}.

For (ii), fix \(u_F\in A_F(x)\). By definition,
\[
\Lambda_E(x\mid u_F)
=
\left\{
\omega_x(u_F,v_E):
v_E\in A_E(x\mid u_F)
\right\}.
\]
Since \((u_F,v_E)\in A(x)\) for every
\(v_E\in A_E(x\mid u_F)\),
\[
\Lambda_E(x\mid u_F)\subseteq\Lambda(x).
\]
Taking diameters gives
\[
\Delta_{\tau,E}(x\mid u_F)
\leq
\Delta_\tau(x).
\]

For (iii), orient \(S\) so that
\(\partial S=\gamma_1-\gamma_2\). By
\eqref{eq:temporal-line-integral} and Stokes' theorem,
\[
\begin{aligned}
\mathcal T_\omega(\gamma_1)
-
\mathcal T_\omega(\gamma_2)
&=
\int_{\gamma_1}\omega-\int_{\gamma_2}\omega \\
&=
\int_{\partial S}\omega \\
&=
\int_S d\omega.
\end{aligned}
\]
Thus \(d\omega\) measures the local geometric path dependence of the temporal
line integral. When \(\gamma_1\) and \(\gamma_2\) are admissible histories,
the same identity gives the corresponding dynamically realised difference in
subjective financial time, proving
\eqref{eq:local-structure-curvature}.

For (iv), suppose \(d\omega=0\) on the contractible neighbourhood \(V\).
By the Poincar\'e lemma there exists
\(\Theta\in C^2(V)\) such that
\[
d\Theta=\omega|_V.
\]
Hence, for every piecewise \(C^1\) curve
\(\gamma:[a,b]\to V\),
\[
\mathcal T_\omega(\gamma)
=
\int_\gamma\omega
=
\int_\gamma d\Theta
=
\Theta(\gamma(b))-\Theta(\gamma(a)),
\]
which proves \eqref{eq:local-state-time}. If
\(\Theta_1\) and \(\Theta_2\) are two such potentials, then
\(d(\Theta_1-\Theta_2)=0\). Since \(V\) is connected,
\(\Theta_1-\Theta_2\) is constant, so the local temporal potential is unique
up to an additive constant.
\end{proof}

\begin{corollary}[Locally unique path-independent subjective time]
\label{cor:unique-path-independent-time}

Let \(V\subseteq U\) be contractible and suppose
\[
\Delta_\tau(x)=0
\qquad
\text{for every }x\in V\cap M,
\]
and
\[
d\omega=0
\qquad
\text{on }V.
\]
Then there exists \(\Theta\in C^2(V)\) such that every admissible history
\(X\) contained in \(V\cap M\) satisfies
\begin{equation}
\tau_X(t)-\tau_X(0)
=
\Theta(X(t))-\Theta(X(0)).
\label{eq:unique-path-independent-time}
\end{equation}
Moreover, at every \(x\in V\cap M\), all admissible velocities generate the
same instantaneous subjective-time rate.

Thus subjective financial time is locally single-rate and path independent.
Neither property implies equality with calendar time.
\end{corollary}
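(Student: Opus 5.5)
The plan is to split the corollary into its two claims and derive each from results already established. Path independence will come from Theorem~\ref{thm:local-structure}(iv). The single-rate claim will come from the definition of \(\Delta_\tau\) as the diameter of the compact interval \(\Lambda(x)\).

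For the first claim, since \(V\) is contractible and \(d\omega=0\) on \(V\), part (iv) of Theorem~\ref{thm:local-structure} (equivalently, the Poincar\'e lemma) supplies \(\Theta\in C^2(V)\) with \(\omega|_V=d\Theta\).

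Now take an admissible history \(X\) with image in \(V\cap M\). By Proposition~\ref{prop:temporal-process}, \(\tau_X\) is absolutely continuous with \(\dot\tau_X(t)=\omega_{X(t)}(\dot X(t))=d\Theta_{X(t)}(\dot X(t))\) almost everywhere. The composition \(\Theta\circ X\) is absolutely continuous, because \(\Theta\) is \(C^1\) (hence locally Lipschitz on the compact image of \(X\)) and \(X\) is absolutely continuous. The chain rule for such compositions gives \((\Theta\circ X)'(t)=d\Theta_{X(t)}(\dot X(t))\) almost everywhere. Integrating both derivatives from \(0\) to \(t\) yields \eqref{eq:unique-path-independent-time}.

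This is the one point needing care. Theorem~\ref{thm:local-structure}(iv) is stated only for piecewise \(C^1\) curves, whereas admissible histories are merely absolutely continuous. I will therefore carry out the AC chain-rule argument directly rather than quote \eqref{eq:local-state-time}. This is the main technical step, but it is standard: work in a chart, where \(\Theta\circ X\) is a \(C^1\) function composed with an AC map.

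For the second claim, fix \(x\in V\cap M\). By \eqref{eq:rate-spectrum}, \(\Lambda(x)=\omega_x(A(x))=[\underline\lambda(x),\overline\lambda(x)]\). The hypothesis \(\Delta_\tau(x)=0\) then gives \(\underline\lambda(x)=\overline\lambda(x)=:\lambda(x)\). Hence \(\omega_x(v)=\lambda(x)\) for every \(v\in A(x)\), so all admissible velocities generate the same instantaneous rate. Combined with \eqref{eq:time-rate}, every admissible history through \(x\) has \(\dot\tau_X=\lambda(X(t))=d\Theta_{X(t)}(\dot X(t))\) almost everywhere.

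For the closing remark that neither property implies equality with calendar time, it suffices to note that \(\lambda(x)\) need not equal \(1\). For instance, \(\Theta\) may be any potential with \(d\Theta_x(v)\neq1\) on \(A(x)\), in which case \(\tau_X(t)-\tau_X(0)\neq t\). The homogeneous case requires the additional normalisation \(\lambda\equiv1\) mentioned in the introduction.
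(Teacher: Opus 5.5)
Your proof is correct and follows the paper's argument. The single-rate claim comes from \(\Lambda(x)\) being a non-empty compact interval of zero diameter, and path independence comes from the potential \(\Theta\) supplied by Theorem~\ref{thm:local-structure}(iv). Your one deviation, checking directly that \((\Theta\circ X)'(t)=d\Theta_{X(t)}(\dot X(t))\) a.e.\ for absolutely continuous \(X\) rather than quoting \eqref{eq:local-state-time}, is a real tightening: the paper applies that identity directly to admissible histories, even though it is stated only for piecewise \(C^1\) curves and admissible histories are merely absolutely continuous.
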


\begin{proof}

Since \(\Delta_\tau(x)=0\), each non-empty compact interval
\(\Lambda(x)\) is a singleton,
\[
\Lambda(x)=\{\lambda(x)\},
\qquad x\in V\cap M.
\]
Hence all admissible velocities at a given state generate the same
instantaneous temporal rate.

Since \(d\omega=0\) on \(V\),
Theorem~\ref{thm:local-structure}(iv) gives a potential
\(\Theta\in C^2(V)\) satisfying \(d\Theta=\omega|_V\). Applying
\eqref{eq:local-state-time} to an admissible history segment gives
\eqref{eq:unique-path-independent-time}.
\end{proof}

\begin{corollary}[Recovery of normalised calendar time]
\label{cor:calendar-time}

Suppose, in addition to the hypotheses of
Corollary~\ref{cor:unique-path-independent-time}, that
\[
\Lambda(x)=\{1\}
\qquad
\text{for every }x\in V\cap M.
\]
Then every admissible history \(X\) contained in \(V\cap M\) satisfies
\begin{equation}
\tau_X(t)
=
\tau_X(0)+t.
\label{eq:calendar-recovery}
\end{equation}
\end{corollary}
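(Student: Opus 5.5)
The argument works at the level of the realised rate, using Proposition~\ref{prop:temporal-process}, and does not need the potential \(\Theta\) from Corollary~\ref{cor:unique-path-independent-time} except as a consistency check. Fix an admissible history \(X\) contained in \(V\cap M\), defined on \([0,T]\). By Proposition~\ref{prop:temporal-process}, \(\tau_X\) is absolutely continuous and
\[
\dot\tau_X(t)=\omega_{X(t)}\!\left(\dot X(t)\right)\in\Lambda(X(t))
\qquad\text{for a.e. }t\in[0,T].
\]
Since \(X(t)\in V\cap M\) for every \(t\), the hypothesis \(\Lambda(x)=\{1\}\) on \(V\cap M\) forces \(\dot\tau_X(t)=1\) for almost every \(t\).

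Absolute continuity of \(\tau_X\) then allows integration of the derivative:
\[
\tau_X(t)-\tau_X(0)=\int_0^t\dot\tau_X(s)\,ds=\int_0^t 1\,ds=t,
\]
which is \eqref{eq:calendar-recovery}. No separate uniqueness or regularity argument should be needed, because the fundamental theorem of calculus for absolutely continuous functions applies directly.

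For context, I would note how this fits with Corollary~\ref{cor:unique-path-independent-time}. Its conclusion \(\tau_X(t)-\tau_X(0)=\Theta(X(t))-\Theta(X(0))\) combines with the present result to give \(\Theta(X(t))=\Theta(X(0))+t\) along admissible histories. So \(\Theta\) acts as a calendar-time potential on the admissible region. This is interpretation, not a step the proof relies on.

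The only point that needs care is the claim that the rate lies in \(\Lambda(X(t))\) almost everywhere. That requires \(\dot X(t)\in A(X(t))\), which holds a.e. by admissibility, so the null exceptional set does not affect the integral. I therefore see no real obstacle; the hypotheses \(\Delta_\tau=0\) and \(d\omega=0\) are not even needed for this step, since \(\Lambda\equiv\{1\}\) already implies \(\Delta_\tau=0\).
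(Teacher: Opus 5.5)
The paper states this corollary without a written proof, and your argument is exactly the immediate one it relies on: Proposition~\ref{prop:temporal-process} gives $\dot\tau_X(t)\in\Lambda(X(t))=\{1\}$ for a.e.\ $t$, and absolute continuity of $\tau_X$ then yields $\tau_X(t)=\tau_X(0)+t$ for every $t$. Your proposal is correct, and you are also right that $d\omega=0$ plays no role here, while $\Delta_\tau=0$ already follows from $\Lambda\equiv\{1\}$.
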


\begin{remark}[Scope of the local structure]
The pair \((\Lambda,d\omega)\) does not completely determine the admissible
financial--neuro-evaluative dynamics. Distinct correspondences \(A\) may
generate identical rate spectra and valuation curvature while supporting
different state trajectories. Theorem~\ref{thm:local-structure} instead
distinguishes three temporal objects. The spectrum \(\Lambda(x)\) describes
instantaneous temporal-rate heterogeneity,
\(\Delta_{\tau,E}(x\mid u_F)\) isolates neuro-evaluative temporal dispersion
when financial motion is fixed, and \(d\omega\) measures the local obstruction
to path-independent temporal accumulation. These objects characterise distinct
features of subjective financial time and do not constitute a complete
invariant of the underlying dynamics.
\end{remark}

\section{Empirical Analysis}
\label{sec:empirical}

Theorem~\ref{thm:neuro-temporal-divergence} implies that financial state
\(x_F\) need not exhaust valuation dynamics when neuro-evaluative state \(x_E\)
differs. The empirical analysis examines the corresponding state-separation
premise: conditional on closely matched observable financial states, is greater
neural-state separation associated with greater subsequent subjective-valuation
divergence? This is a proof-of-concept test of a premise of the theoretical
mechanism, not a direct test of subjective financial time. In particular, the
data do not identify \(\tau\), \(\dot{\tau}\), \(\omega\), or
\(\Delta_{\tau,E}\).

\subsection{Design}
\label{subsec:empirical-design}

The analysis contains \(1{,}183\) behavioural--fMRI observations from \(5\)
participants and \(40\) runs, yielding \(1{,}126\) valid within-run valuation
transitions and \(798\) financially matched pairs. Data construction,
preprocessing, ROI specification, and quality control are reported in
Appendix~\ref{app:appendix}.

The neural specification follows \citet{PivaEtAl2019}, who report that relative
subjective value is positively associated with vmPFC and negatively associated
with dmPFC activity. Financial state is represented by
\[
X_{F,t}=(M_{L,t},M_{R,t},D_{L,t},D_{R,t},A_t),
\]
where \(M\), \(D\), and \(A_t\) denote monetary amounts, objective delays, and
the Self/Other condition. The empirical neural coordinate is defined as
\[
\nu_t
=
z(\mathrm{vmPFC}_t)-z(\mathrm{dmPFC}_t).
\]
This contrast provides a scalar measure of relative medial-prefrontal activity
motivated by the opposing valuation-related associations reported by
\citet{PivaEtAl2019}. It is an empirical neural measurement and is not
identified with the theoretical neuro-evaluative state \(x_E\), which may in
general be multidimensional. Figure~\ref{fig:roi-data-overlay} reports
event-related BOLD percent signal change (PSC) in the pre-specified vmPFC,
dmPFC, and TPJ regions.

\begin{figure}[t]
\centering
\includegraphics[width=\linewidth]
{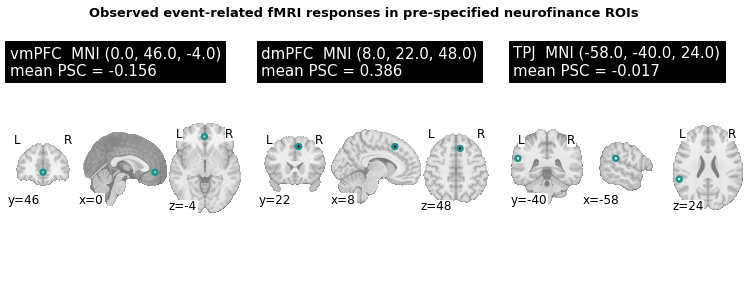}
\caption{\textbf{Event-related fMRI responses in pre-specified ROIs.}
Observed PSC for the \(1{,}183\) retained observations in 5-mm vmPFC
\((0,46,-4)\), dmPFC \((8,22,48)\), and TPJ \((-58,-40,24)\) MNI152
ROIs. Colour represents ROI-level PSC, not voxelwise whole-brain inference.}
\label{fig:roi-data-overlay}
\end{figure}

For consecutive trials, define
\[
\Delta SV_t=SV_{t+1}-SV_t.
\]
Financial states are matched on standardised
\((M_L,M_R,D_L,D_R)\) within participant, run, and agency condition, producing
\[
N_s=(154,162,166,163,153)
\]
matched pairs across participants.

For matched observations \(i,j\), neural-state separation and subsequent
valuation divergence are defined by
\[
D_E(i,j)=|\nu_i-\nu_j|,
\qquad
D_{\Delta SV}(i,j)=|\Delta SV_i-\Delta SV_j|.
\]
The participant-level statistic is
\[
\rho_s
=
\operatorname{corr}_S
\left(D_E,D_{\Delta SV}\right).
\]
Thus \(\rho_s>0\) indicates that, among observations with closely matched
financial states, greater separation in the empirical neural coordinate is
associated with greater subsequent subjective-valuation divergence.

Inference uses \(5{,}000\) structure-preserving permutations within
participant--run--agency cells. Robustness analyses control for residual
financial-state distance, retain only the closest \(50\%\) of financial
matches, and replace medial-prefrontal separation with TPJ separation.
TPJ serves as a decision-relevant comparator because \citet{PivaEtAl2019}
implicate it in social decision processing and valuation-related activity.
Full computational specifications are reported in
Appendix~\ref{app:appendix}.

\subsection{Results}
\label{subsec:empirical-results}

Table~\ref{tab:empirical-main} reports the participant-level results.

\begin{table}[t]
\centering
\caption{Neural-state separation and subsequent valuation divergence}
\label{tab:empirical-main}
\small
\begin{tabular}{crrrrr}
\toprule
Participant & Pairs & \(\rho\) & \(\rho_{\rm partial}\) &
\(p_{\rm perm}\) & \(\rho_{\rm TPJ}\) \\
\midrule
01 & 154 &  0.006 &  0.012 & 0.448 &  0.009 \\
02 & 162 & -0.087 & -0.089 & 0.624 &  0.092 \\
03 & 166 & \textbf{0.253} & \textbf{0.251} &
\textbf{0.001} & 0.205 \\
04 & 163 & -0.053 & -0.053 & 0.734 & -0.010 \\
05 & 153 &  0.097 &  0.106 & 0.139 & -0.096 \\
\midrule
Total & 798 & & & & \\
\bottomrule
\end{tabular}
\end{table}

The participant-level coefficients are
\[
\rho_s=(0.006,-0.087,0.253,-0.053,0.097),
\]
with permutation probabilities
\[
p_s=(0.448,0.624,0.001,0.734,0.139).
\]
Three of the five coefficients are positive, but the estimates exhibit
substantial participant heterogeneity. Fisher combination gives
\(p_{\rm Fisher}=0.0216\). Given the small number of participants and the
heterogeneous coefficient signs, this combined probability is interpreted as
descriptive evidence against the joint null rather than as evidence of a
homogeneous population-level effect.

Participant~03 provides the clearest within-participant evidence, with
\(N=166\), \(\rho=0.253\), \(\rho_{\rm partial}=0.251\), and
\(p_{\rm perm}=0.001\). Adjustment for residual financial-state distance changes
the coefficient by only \(|\Delta\rho|=0.0026\), approximately \(1.0\%\).
Restricting identification to the closest \(50\%\) of financial matches gives
\(N_{\rm strict}=83\), \(\rho_{\rm strict}=0.207\), and
\(p_{\rm desc}=0.061\). Hence
\[
\frac{\rho_{\rm strict}}{\rho}
=
\frac{0.207}{0.253}
=
0.818,
\]
so \(81.8\%\) of the primary coefficient remains when identification is
restricted to the closest half of the matched observations.

For Participant~03, the TPJ comparator gives
\(\rho_{\rm TPJ}=0.205\), compared with
\(\rho_{\rm MPFC}=0.253\), a difference of \(0.048\). Under strict matching,
the corresponding coefficients are
\(\rho_{\rm TPJ,strict}=0.143\) and
\(\rho_{\rm MPFC,strict}=0.207\), increasing the difference to \(0.064\).
The medial-prefrontal coordinate therefore produces the larger association
under both specifications. The positive TPJ coefficient, however, precludes
an interpretation based on anatomical selectivity.

Figure~\ref{fig:sub03-brain-overlay} shows Participant~03's event-related fMRI
responses in the pre-specified vmPFC, dmPFC, and TPJ regions of interest,
displayed in three orthogonal anatomical views. Mean percentage signal change
is \(-0.310\), \(0.188\), and \(-0.274\), respectively. The anatomical
overlays represent ROI-level responses and should not be interpreted as
voxelwise whole-brain statistical maps.

\begin{figure}[t]
\centering
\includegraphics[width=\linewidth]
{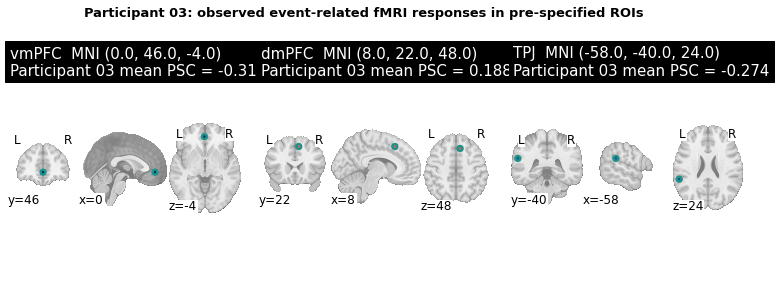}
\caption{\textbf{Participant~03 ROI responses.}
Observed event-related PSC in the pre-specified vmPFC, dmPFC, and TPJ ROIs
(\(N=166\), \(\rho=0.253\), \(p_{\rm perm}=0.001\)).
Colour denotes ROI-level PSC, not voxelwise whole-brain inference.}
\label{fig:sub03-brain-overlay}
\end{figure}

The empirical data sequence is
\[
1{,}200\rightarrow1{,}183\rightarrow1{,}126\rightarrow798,
\]
corresponding to candidate trials, quality-controlled observations, valid
valuation transitions, and financially matched pairs. For Participant~03, the
coefficient sequence
\[
0.253\rightarrow0.251\rightarrow0.207
\]
under the primary, residual-distance-adjusted, and strict-match specifications
shows limited attenuation as financial matching becomes more restrictive.

The results provide proof-of-concept evidence for the state-separation premise
underlying \(x=(x_F,x_E)\). Similarity in observable financial state \(x_F\)
need not imply similarity in neural measurements associated with
neuro-evaluative state. For Participant~03, greater neural-state separation
under closely matched financial conditions is associated with greater
subsequent subjective-valuation divergence, and this association remains
largely preserved under tighter matching.

The empirical claim is narrower than the theoretical result. The neural
measurements do not identify the latent state \(x_E\), and the analysis does
not estimate \(\tau\), \(\dot{\tau}\), \(\omega\), or
\(\Delta_{\tau,E}\). It therefore does not directly test
Theorem~\ref{thm:neuro-temporal-divergence}. Instead, the evidence supports the
empirical premise that financial-state similarity can coexist with
valuation-relevant neural-state variation associated with subsequent valuation
divergence. This is consistent with the state-separation mechanism permitting
neuro-evaluative temporal divergence, without directly identifying subjective
financial time.

\section{Discussion}
\label{sec:discussion}

This paper develops a dynamic neurofinance theory in which subjective
financial time is generated by evaluative progression along
financial--neuro-evaluative histories rather than imposed exogenously. Unlike
the established programme of identifying neural correlates of reward, risk,
loss, framing, and subjective value
\citep{CamererLoewensteinPrelec2005,KnutsonAdamsFongHommer2001,
KuhnenKnutson2005,KableGlimcher2007,PreuschoffBossaertsQuartz2006,
DeMartinoKumaranSeymourDolan2006,TomFoxTrepelPoldrack2007,
MiendlarzewskaKometerPreuschoff2019}, the analysis asks what temporal structure
follows when valuation-related neural state evolves jointly with financial
state on \(x=(x_F,x_E)\).

Theorem~\ref{thm:existence} and Proposition~\ref{prop:temporal-process}
establish that admissible histories generate
\[
\tau_X(t)
=
\tau_0+\int_0^t\omega_{X(s)}(\dot X(s))\,ds,
\]
with
\[
\dot\tau_X(t)=\omega_{X(t)}(\dot X(t)).
\]
Calendar time remains the parameter of physical motion, whereas \(\tau_X\)
represents valuation-induced temporal progression. This distinction is
consistent with evidence that temporal experience varies with attention,
affect, interoception, memory, and distributed neural dynamics
\citep{BuhusiMeck2005,WittmannVanWassenhove2009,Wittmann2013,
TsaoYousefzadehMeckMoserMoser2022,Buzsaki2026}.

Theorem~\ref{thm:dispersion} identifies \(\Lambda(x)\) with dynamically
realisable subjective-time rates and
\(\Delta_\tau(x)=\operatorname{diam}\Lambda(x)\) with their dispersion.
Valuation heterogeneity documented for risk, loss, framing, affect, and
physiological state
\citep{DeMartinoCamererAdolphs2010,DeMartinoKumaranSeymourDolan2006,
PreuschoffQuartzBossaerts2008,CoatesHerbert2008,
SapienzaZingalesMaestripieri2009,PivaEtAl2019Dataset}
therefore has temporal consequences when it corresponds to dynamically
realisable valuation motion.

The central result, Theorem~\ref{thm:neuro-temporal-divergence}, isolates the
neuro-evaluative component. For fixed financial motion,
\(\Delta_{\tau,E}(x\mid u_F)>0\) and matched neuro-evaluative realisability can
generate \(X_F^+(t)=X_F^-(t)\) while
\(\tau_{X^+}(t)\neq\tau_{X^-}(t)\). At the common initial state, the financial
term cancels and the difference in temporal rates is generated by
neuro-evaluative motion. Corollary~\ref{cor:neuro-nonreducibility} consequently
shows that financial equivalence need not imply temporal equivalence.

The empirical analysis examines the state-separation premise underlying this
result rather than subjective financial time itself. \citet{PivaEtAl2019}
identify a medial-prefrontal valuation architecture in which relative
subjective value is positively associated with vmPFC activity and negatively
associated with dmPFC activity, with dmPFC carrying valuation information
across Self/Other and intertemporal/risky-choice contexts. This motivates
\(\nu=z(\mathrm{vmPFC})-z(\mathrm{dmPFC})\) as an empirical measure of relative
medial-prefrontal valuation activity. It is not identified with the theoretical
state \(x_E\), which may be multidimensional.

The matched-state analysis provides proof-of-concept evidence that similarity
in observable financial state need not eliminate valuation-relevant neural
variation. Participant-level coefficients are heterogeneous, with three of
five positive. Fisher combination gives \(p_{\rm Fisher}=0.0216\), which is
interpreted as descriptive evidence against the joint null rather than
evidence of a homogeneous population effect. Participant~03 provides the
clearest within-participant result, with \(\rho=0.253\) and
\(p_{\rm perm}=0.001\). Residual financial-distance adjustment gives
\(\rho=0.251\), while restriction to the closest \(50\%\) of financial matches
retains \(\rho=0.207\). TPJ produces a weaker but positive association for this
participant, with \(\rho_{\rm TPJ}=0.205\). The medial-prefrontal coordinate
therefore produces the larger association, although the positive TPJ result
precludes a claim of anatomical selectivity.

This interpretation complements intertemporal-choice research showing that
objective delay affects valuation circuitry and that discounted subjective
value is represented in medial-prefrontal and striatal systems
\citep{McClureLaibsonLoewensteinCohen2004,KableGlimcher2007,
WittmannPaulus2008}. The cross-context valuation evidence of
\citet{PivaEtAl2019} similarly examines how externally specified temporal and
decision conditions shape valuation. The present framework reverses this
direction. Evolving valuation dynamics can themselves contribute to subjective
temporal structure.

Theorem~\ref{thm:local-structure} distinguishes instantaneous temporal
dispersion from historical path dependence. The quantities \(\Lambda(x)\),
\(\Delta_\tau(x)\), and \(\Delta_{\tau,E}(x\mid u_F)\) characterise
instantaneous rates and their dispersion, while \(d\omega\) governs the local
geometric obstruction to path independence through
\[
\mathcal T_\omega(\gamma_1)-\mathcal T_\omega(\gamma_2)
=
\int_S d\omega.
\]
When the relevant curves are admissible histories, this difference represents
dynamically realised path dependence in subjective financial time.
Corollary~\ref{cor:unique-path-independent-time} recovers locally single-rate,
path-independent time when \(\Delta_\tau=0\) and \(d\omega=0\).
Corollary~\ref{cor:calendar-time} further recovers homogeneous calendar time
when \(\Lambda(x)=\{1\}\). Conventional time therefore appears as a limiting
case of the model.

The framework extends geometric approaches to rational agency
\citep{Stiefenhofer2026GeometryTime,Stiefenhofer2026EconomicTemporality,
Stiefenhofer2026KantianTime,Stiefenhofer2025EthicalConsumption}
by embedding evaluative temporal progression in a neurofinancial state space
and relating temporal multiplicity to dynamically realisable valuation
trajectories. The empirical contribution is more limited. Valuation-relevant
neural variation remains observable after financial-state matching, providing
proof-of-concept evidence for the distinction between financial and
neuro-evaluative state.

The empirical limitations are substantial and define the scope of this
evidence. The sample contains five participants, participant-level effects are
heterogeneous, and the clearest association occurs in Participant~03. The
source experiment measures relative subjective value rather than subjective
temporal experience. Consequently, \(\nu\neq x_E\),
\(\Delta SV\neq\dot\tau\), and matched observable financial states are not
identical realised financial trajectories. The analysis does not estimate
\(\omega\), \(\Delta_\tau\), \(\Delta_{\tau,E}\), or \(d\omega\) and therefore
does not constitute a direct empirical test of the temporal results.

A direct test would combine the valuation architecture of
\citet{PivaEtAl2019} with longitudinal measurement of financial state,
valuation-related neural state, subjective valuation, and temporal experience.
Holding realised financial trajectories fixed would permit direct examination
of whether distinct neuro-evaluative trajectories generate different
subjective-time rates. Repeated histories reaching common financial endpoints
would additionally permit examination of the path-dependent temporal
accumulation characterised by Theorem~\ref{thm:local-structure}.

The theoretical result establishes
\[
X_F^+=X_F^-
\not\Rightarrow
\tau_{X^+}=\tau_{X^-}.
\]
The empirical analysis establishes the narrower result that closely matched
observable financial states need not eliminate valuation-relevant neural
variation associated with subsequent valuation dynamics. The empirical result
does not establish subjective-time divergence, but demonstrates the
neurofinancial state separation required for the theoretical mechanism to have
empirical relevance.

\section{Conclusion}
\label{sec:conclusion}

This paper develops a continuous-time neurofinance theory in which subjective
financial time is generated by evaluative progression along
financial--neuro-evaluative dynamics rather than imposed solely as an external
clock. On the joint state space \(x=(x_F,x_E)\), realised evaluative
progression is represented by the path integral of the valuation one-form
\(\omega\) along admissible histories. The framework reverses conventional
intertemporal analysis by asking not only how time affects valuation, but what
temporal structure follows from evolving valuation dynamics.

The theory identifies three mechanisms. Theorem~\ref{thm:dispersion}
characterises realised temporal-rate dispersion by \(\Delta_\tau\).
Theorem~\ref{thm:neuro-temporal-divergence} establishes that under matched
financial dynamics \(X_F^+(t)=X_F^-(t)\) need not imply
\(\tau_{X^+}(t)=\tau_{X^-}(t)\). Theorem~\ref{thm:local-structure} separates
instantaneous rate heterogeneity from path dependence governed locally by
\(d\omega\). Financial equivalence therefore need not imply temporal
equivalence.

The empirical analysis provides proof-of-concept evidence for the required
neural state separation. From \(1{,}183\) behavioural--fMRI observations,
\(1{,}126\) valuation transitions generate \(798\) financially matched pairs.
For Participant~03, medial-prefrontal valuation-state separation
\(\nu=z(\mathrm{vmPFC})-z(\mathrm{dmPFC})\) predicts subsequent valuation
divergence with \(\rho=0.253\) and \(p_{\rm perm}=0.001\). The coefficient
remains \(0.251\) after residual financial-distance adjustment and \(0.207\)
for the closest \(50\%\) of matches. Effects are heterogeneous across
participants, with \(p_{\rm Fisher}=0.0216\).

These data provide proof-of-concept evidence for neural state separation rather
than subjective financial time itself. The empirical coordinate satisfies \(\nu\neq x_E\),
neither \(\dot{\tau}\) nor \(\Delta_{\tau,E}\) is observed, and matched
financial states are not identical continuous trajectories. Direct
identification requires joint longitudinal measurement of financial motion,
neural valuation state, subjective valuation, and subjective temporal
progression.

The central implication is that valuation-related neural dynamics may
contribute not only to financial choice but also to the temporal structure
within which financial experience unfolds. If confirmed directly, valuation
would not merely occur \emph{in} subjective financial time. Evaluative
progression through the evolving neuro-evaluative state would contribute to
its construction.

\section*{Disclosure of Interests}

The author declares no competing interests.

\appendix
\section{Appendix}
\label{app:appendix}

The analysis uses OpenNeuro \texttt{ds001882}, version 1.0.5
\citep{PivaEtAl2019Dataset}, comprising \(5\) participants,
\(40\) functional runs, and \(1{,}200\) candidate trials.

\begin{table}[htbp]
\centering
\caption{Data construction}
\label{tab:data-lineage}
\small
\begin{tabular}{lrrl}
\toprule
Stage & Available & Retained & Criterion \\
\midrule
Participants  & 5     & 5     & Completed preprocessing \\
Runs          & 40    & 40    & Valid BOLD and motion data \\
Trials        & 1,200 & 1,183 & \(\max FD\leq0.5\) mm; finite ROI responses \\
Transitions   & 1,183 & 1,126 & Consecutive within-run trials \\
Matched pairs & 1,126 & 798   & Nearest-state matching \\
\bottomrule
\end{tabular}
\end{table}

Retained trials and matched-pair counts by participant were
\((236,237,237,239,234)\) and \((154,162,166,163,153)\), respectively. Trial-level PSC was extracted from 5-mm MNI spheres at vmPFC \((0,46,-4)\),
dmPFC \((8,22,48)\), and TPJ \((-58,-40,24)\), with \(TR=2.0\) s, baseline
\([-2,0)\) s, and response window \([4,8)\) s. The neural coordinate was
\(\nu_{irt}=z(\mathrm{vmPFC}_{irt})-z(\mathrm{dmPFC}_{irt})\). Financial matching used standardised amount and delay coordinates within
participant, run, and agency condition. Participant median matching distances
were \((1.185,1.227,1.231,1.309,1.211)\).

\begin{table}[htbp]
\centering
\caption{Computational specification}
\label{tab:reproducibility-specification}
\small
\begin{tabular}{p{3.2cm}p{10.2cm}}
\toprule
Component & Specification \\
\midrule
ROIs
& vmPFC \((0,46,-4)\), dmPFC \((8,22,48)\), TPJ \((-58,-40,24)\);
5-mm MNI spheres \\

PSC
& Baseline \([-2,0)\) s; response \([4,8)\) s \\

Acquisition
& \(TR=2.0\) s; 219 scans \\

Motion QC
& Response-window \(\max FD\leq0.5\) mm \\

Merge
& Participant--run--trial; duplicate keys rejected; no neural imputation \\

Financial match
& Standardised amount/delay coordinates; participant--run--agency conditioning;
nearest Euclidean neighbour \\

Neural index
& \(z(\mathrm{vmPFC})-z(\mathrm{dmPFC})\) \\

Primary statistic
& Participant-specific Spearman correlation \\

Inference
& \(5{,}000\) within-cell permutations \\

Robustness
& Rank-partial adjustment; closest \(50\%\) of matches; TPJ comparator \\
\bottomrule
\end{tabular}
\end{table}

For participant \(s\), permutation inference used

\begin{equation}
p_s
=
\frac{
1+\sum_{b=1}^{5000}
\mathbf{1}
\left\{
\rho_s^{(b)}
\geq
\rho_s^{\mathrm{obs}}
\right\}
}{
5001
}.
\label{eq:appendix-permutation-p}
\end{equation}

The computational lineage was therefore
\[
1{,}200
\rightarrow
1{,}183
\rightarrow
1{,}126
\rightarrow
798.
\]

\begin{table}[htbp]
\centering
\caption{Participant-level inference and robustness}
\label{tab:appendix-participant-results}
\small
\begin{tabular}{lrrrrrr}
\toprule
Participant &
Pairs &
\(\rho\) &
\(\rho_{\mathrm{partial}}\) &
\(p_{\mathrm{perm}}\) &
\(\rho_{\mathrm{strict}}\) &
\(\rho_{\mathrm{TPJ}}\) \\
\midrule
01 & 154 &  0.006 &  0.012 & 0.448 & -0.006 &  0.009 \\
02 & 162 & -0.087 & -0.089 & 0.624 &  0.043 &  0.092 \\
03 & 166 &  \textbf{0.253} & \textbf{0.251} &
\textbf{0.001} & \textbf{0.207} & 0.205 \\
04 & 163 & -0.053 & -0.053 & 0.734 &  0.002 & -0.010 \\
05 & 153 &  0.097 &  0.106 & 0.139 & -0.114 & -0.096 \\
\bottomrule
\end{tabular}
\end{table}

The five one-sided participant-level permutation probabilities give
\(p_{\mathrm{Fisher}}=0.0216\). For Participant~03, restricting matching to
the closest \(50\%\) reduces the sample from \(166\) to \(83\) pairs, with
\(\rho_{\mathrm{strict}}=0.207\) and
\(\rho_{\mathrm{TPJ,strict}}=0.143\).


\end{document}